





\documentclass[11pt]{article}
\usepackage[left=1in, right=1in, top=2.5cm]{geometry}

\usepackage{booktabs}

\usepackage{amsmath,amssymb,amsfonts}

\usepackage{algorithmic}
\usepackage{graphicx}

\usepackage{multicol}
\usepackage{wrapfig}
\usepackage{lipsum}


\usepackage{subfigure}
\usepackage{times,hyperref}
\usepackage{epstopdf} 
\usepackage{amsthm}
\usepackage{caption}
\usepackage{color}
\newtheorem{theorem}{Theorem}
\newtheorem{lemma}{Lemma}

\newcommand{\presec}{\vspace{-0.0in}}
\newcommand{\postsec}{\vspace{-0.0in}}

\newcommand{\precaption}{\vspace{-0in}}
\newcommand{\postcaption}{\vspace{-0in}}

\newcommand{\RSG}{\textsc{DyHypes}}
\newcommand{\RSGS}{\textsc{DyHypesServer}}

\usepackage[ruled,vlined,linesnumbered]{algorithm2e}

\begin{document}

\title{\Large Locally Self-Adjusting Hypercubic Networks}
\date{}

\author{Sikder Huq and Sukumar Ghosh \\
Department of Computer Science, The University of Iowa}

\maketitle
\begin{abstract}

In a prior work (ICDCS 2017), we presented a distributed self-adjusting algorithm DSG for skip graphs. DSG performs topological adaption to communication pattern to minimize the average routing costs between communicating nodes. In this work, we present a distributed self-adjusting algorithm (referred to as DyHypes) for topological adaption in hypercubic networks. One of the major differences between hypercubes and skip graphs is that hypercubes are more rigid in structure compared skip graphs. This property makes self-adjustment significantly different in hypercubic networks than skip graphs. Upon a communication between an arbitrary pair of nodes, DyHypes transforms the network to place frequently communicating nodes closer to each other to maximize communication efficiency, and uses randomization in the transformation process to speed up the transformation and reduce message complexity. We show that, as compared to DSG, DyHypes reduces the transformation cost by a factor of $O(\log n)$, where $n$ is the number of nodes involved in the transformation. Moreover, despite achieving faster transformation with lower message complexity, the combined cost (routing and transformation) of DyHypes is at most a $\log \log n$ factor more than that of any algorithm that conforms to the computational model adopted for this work. Similar to DSG, DyHypes is fully decentralized, conforms to the $\mathcal{CONGEST}$ model, and requires $O(\log n)$ bits of memory for each node, where $n$ is the total number of nodes.

\end{abstract}


\vfill\eject

\section{Introduction} \label{sec:introduction} 

Hypercubic networks are widely used in peer-to-peer and parallel computing. The worst-case routing distance in hypercubic networks is $O(\log n)$, where $n$ is the number of nodes in the network. However, given that many real world communication patterns are skewed, topological self-adjustment has the potential to significantly improve the overall routing performance by reducing the average routing distance between frequently communicating nodes.

In this paper, we present a self-adjusting algorithm \RSG for hypercubic networks. Our algorithm performs topological adaptation to unknown communication patterns to maximize communication efficiency. The self-adjusting model we use for hypercubic networks is similar to the one we used for skip graphs \cite{DSG}. In summary, we use $\mathcal{CONGEST}$ model for communications, $O(\log n)$ bit memory for each node, and after any communication $(u,v)$, communicating nodes $u$ and $v$ get attached with a direct link between each other in the transformed network.

Upon a communication request, algorithm \RSG~ performs routing using the standard routing algorithm of hypercubic networks and then partially transforms the network conforming to our self-adjusting model. We show that both the routing and transformation costs of \RSG~ are at most a $\log \log$ factor more than that of the optimal algorithm. 

Compared to DSG, algorithm \RSG~ improves the transformation cost by a logarithmic factor and reduced amortized message complexity by a polylogarithmic factor. A comparison between DSG and \RSG~ is presented in Table \ref{tab:results}. The structure of hypercubic networks is more rigid compared to that of the skip graphs as nodes are split into two \emph{exact} halves as they are placed in the 0-networks and 1-networks in their equivalent tree model. This structural rigidity is a major challenge for self-adjustment in hypercubic networks.

\begin{table}[htbp]
\scriptsize
\centering
\begin{tabular}{|c|c|c|}

\hline
 &\textbf{DSG} & \textbf{\RSG} \\[1ex]
\hline
Routing cost factor to the working set bound & constant   & constant \\
Routing cost factor to the optimal cost & constant   & $O(\log \log n)$ \\
Transformation cost factor to the working set bound & logarithmic   & constant \\
Total cost factor to the working set bound & logarithmic   & constant \\
Total cost factor to the optimal algorithm & $O(\log n)$   & $O(\log \log n)$ \\
Transformation message complexity for routing distance $d$ & $O(d^2 2^d)$   & $O(2^d)$ (amortized)\\
Memory per node & $O(\log n)$ bits   & $O(\log n)$ bits \\
Communication model & $\mathcal{CONGEST}$   & $\mathcal{CONGEST}$ \\
\hline
\end{tabular}
\caption{Results summary}
\label{tab:results}
\end{table}

\subsection{Our Contributions}
\begin{enumerate}
\item We derive a lower bound for the performance of any self-adjusting algorithm designed for hypercubic networks conforming to our self-adjusting model.
\item We propose algorithm \RSG, and show that the cost of our algorithm is at most a $\log \log$ factor more than that of the optimal algorithm.
\item We propose a simple algorithm \RSGS~ for the client-server model.
\end{enumerate}

\presec
\section{Model and Definitions} \label{sec:model} 
\postsec

Each node of an $N$-dimensional hypercube has an $N$-bit coordinate. We denote the $N$-bit coordinate of node $x$ as $Coord(x)$. We also refer to the $i$-th bit of the coordinate of node $x$ as $Coord_i(x)$, where $1 \leq i \leq \log N$. Obviously, $N = \log n$, where $n$ is the total number of nodes.

We use a binary tree based representation \footnote{A similar tree representation was used in our prior work \cite{DSG} to represent a skip graph, where each node of the tree maps to a linked list in the Skip Graph.} of hypercubic networks. We represent an $N$-dimensional hypercube by a full binary tree of height $N$, where each node of the tree represents a $k$-dimensional ($k \leq N$) hypercube within the $N$-dimensional hypercube. The root node represents the entire hypercube, and each child of the root node represents an $(N-1)$-dimensional hypercube (subgraph of the entire $N$-dimensional hypercube). For any non-leaf node in the tree representation, we refer to one of its children as the 0-subnetwork and the other as the 1-subnetwork. Any node $x$ in the $(n-1)$-dimensional hypercubes represented by the 0-subnetwork and 1-subnetwork have the first bit of their coordinate ($Coord_1(x)$) as 0 and 1, respectively.

Similarly, a ``grand-child'' of the root node represents an $(N-2)$-dimensional hypercube, and tree is constructed recursively by splitting the hypercube represented by any non-leaf node into 0 and 1 subnetworks, based on the coordinates of the nodes. We say the root node of the tree is at level 0, and any node of the tree at level $i$ represents an $(N-i)$-dimensional hypercube. Also, any node $x$ in $b$-subnetwork at level $i$ has $Coord_i(x) = b$, where $b \in \{0,1\}$. Figure \ref{fig:hypercube_tree} shows an example of the tree representation of a 3-dimensional hypercube.

\begin{figure}[htb]
\def \subfigcapskip{0pt}
\hspace{-0.2in}
\centering
\subfigure[Tree representation of a hypercubic network of 8 nodes.]
    {\label{fig:hypercube_tree} \includegraphics[width=0.55\columnwidth]{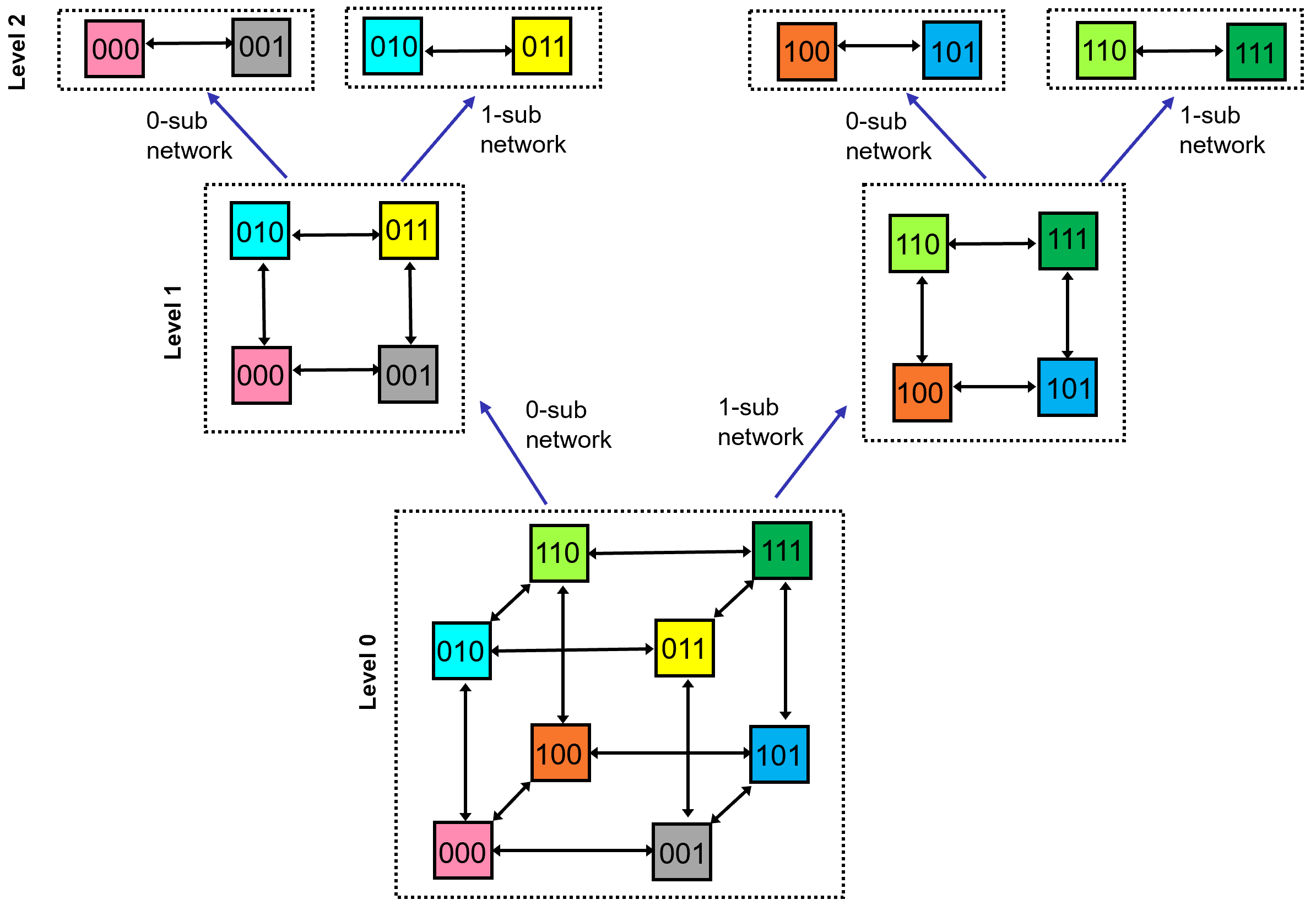}}
\centering
\subfigure[Tree representation with groups.]
    {\label{fig:dyhyp_ex_groups} \includegraphics[width=0.5\textwidth]{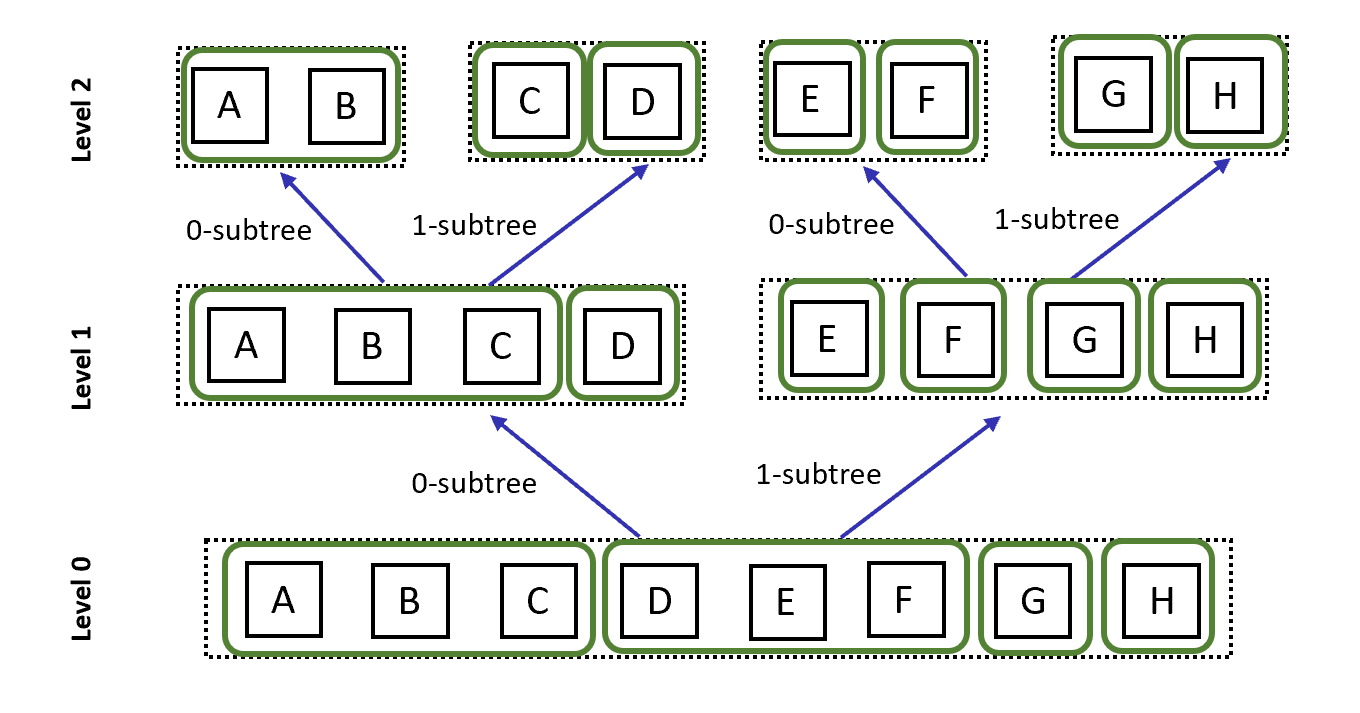}}
\precaption
\caption{Figure in (a) shows how a hypercubic network is representation by a binary tree based on the coordinates of the nodes. Figure in (b) shows an example of groups formed by the nodes. The green boxes indicate groups at different levels. Nodes are placed in the incremental order of their coordinates from left to right.}
\postcaption
\end{figure}

%

\noindent
\textbf{Definition (\textit{Level-$d$ subtree}).} For any node $x$, the level-$d$ subtree of node $x$ is the $(N-d)$-hypercube $C_{N-d}$, such that $x \in C_{N-d}$ and $C_{N-d}$ is represented by a node in the tree representation of the entire $N$-dimensional hypercube.

For example, in Figure \ref{fig:hypercube_tree}, the level-1 subtree of the node with coordinate 010 is the 2-dimensional hypercube consisting of nodes with coordinates 000, 001, 010 and 011.

\noindent
\textbf{Definition (\textit{Level-d complementary subtree}).} An $(N-d)$-hypercube $C_{N-d}$ is a level-$d$ complementary subtree of a node $x$ if and only if (a) $C_{N-d}$ is not the level-$d$ subtree of node $x$, and (b) $C_{N-d}$ is a subgraph of the level-$(d-1)$ subtree of node $x$ and.

For example, in Figure \ref{fig:hypercube_tree}, the level-1 complementary subtree of the node with coordinate 010 is the 2-dimensional hypercube consisting of nodes with coordinates 100, 101, 110 and 111.

We denote the level-$d$ subtree and level-$d$ complementary subtree of node $x$ as $s^x_d$ and $\sim s^x_d$, respectively.

For a pair of communicating nodes $(u,v)$ in a $N$-dimensional hypercube, let $L_{lca}(u,v)$ be the highest level in the tree representation at which there is a node mapped to a subtree containing both nodes $u$ and $v$. For example, in the hypercube shown in Figure \ref{fig:hypercube_tree}, if $u$ is the node 000 and $v$ is the node 010, then $L_{lca}(u,v) = 1$.

\noindent
\textbf{Definition (\textit{Tree Distance}).} The tree distance between nodes $u$ and $v$ is $N - L_{lca}(u,v)$.

For example, in the 3-dimensional hypercube shown in Figure \ref{fig:hypercube_tree} the tree distance between nodes 000 and 111 is 3, where the actual shortest distance is 1. We denote the tree distance of nodes $u$ and $v$ in the hypercube $\mathcal{N}_t$ as $d_{Tree}(\mathcal{N}_t, (u,v))$.

\noindent
\subsection{Computational and Self-Adjusting Model.} We consider a similar computational and self-adjusting model that we used in \cite{DSG}. We assume a synchronous message passing model, where communications occur in \emph{rounds}. Our computational model has the following properties:

\begin{enumerate}
    \item A node can send and receive at most 1 message through a link in a round.
    \item The size of each message is $O (\log n)$ bits (i.e. $\mathcal{CONGEST}$ model).
    \item Each node holds a memory of size $O (\log n)$ bits.

\end{enumerate}

Let $V$ be a set of given $2^N$ nodes and $\mathcal{N}$ be the family of all possible $N$-dimensional hypercubes formed by the nodes of $V$. Let $\sigma = (\sigma_1, \sigma_2, ... , \sigma_{m})$ be an unknown access sequence consisting of $m$ sequential communication requests, $\sigma_t = (u,v) \in V \times V, u \neq v$ denotes a routing request from source $u$ to destination $v$ at time $t$. Given a hypercube $\mathcal{N}_t$, we define the routing distance $d(\mathcal{N}_t, \sigma_t)$ as the number of intermediate nodes in the shortest path between the source and destination associated with request $\sigma_t$.

At any time $t$, let $\mathcal{N}_t \in \mathcal{N}$ be a hypercube and a pair of nodes $(u,v) \in V \times V, u \neq v$ communicate, a self-adjusting algorithm performs the followings:

\begin{enumerate}
\item Establishes communication between nodes $u$ and $v$ in $\mathcal{N}_t$.
\item Transforms the network $\mathcal{N}_t$ to another network $\mathcal{N}' \in \mathcal{N}$, such that nodes $u$ and $v$ move to a subtree of size two in $\mathcal{N}'$. This implies that a nodes $u$ and $v$ get connected by a direct link after the transformation.
\end{enumerate}

Let an algorithm $\mathcal{A}$ transforms the hypercube $\mathcal{N}_t$ to $\mathcal{N}_{t+1}$. We define the cost for network transformation as the number of rounds needed to transform the topology. We denote this transformation cost at time $t$ as $\rho( \mathcal{A}, \mathcal{N}_t, \sigma_{t})$. Similar to the prior work \cite{splayNet,DSG}, we define the cost of serving request $\sigma_{t}$ as the routing distance between the communicating nodes plus the cost of transformation performed by $\mathcal{A}$ plus one, \emph{i.e.}, $d(\mathcal{N}_t, \sigma_t)+ \rho( \mathcal{A}, \mathcal{N}_t, \sigma_{t})+1$.

\noindent
\textbf{Definition (\textit{Average and Amortized Cost}).} Given an initial hypercube $\mathcal{N}_0$, the average cost for algorithm $\mathcal{A}$ to serve a sequence of communication requests
$\sigma = (\sigma_1, \sigma_2, \cdots, \sigma_m)$ is:

\begin{equation}
Cost( \mathcal{A}, N_0, \sigma) = \frac{1}{m} \sum_{i=1}^{m} (d(\mathcal{N}_i, \sigma_i)+\rho( \mathcal{A}, \mathcal{N}_i, \sigma_{i})+1)
\end{equation}

The amortized cost of $\mathcal{A}$ is defined as the worst case cost to serve a
communication sequence $\sigma$, \textit{i.e.} $max_{N_0,\sigma}$ Cost$(\mathcal{A}, N_0, \sigma)$.

\section{Working Set Property} \label{sec:ws}
\noindent

\textbf{Definition (\textit{Communication Graph.}).} Given a communication network and a set of communications between arbitrary pair of nodes, A \emph{communication graph} $G(V,E)$ is a simple undirected graph where $V$ is the set of all the nodes of the network, and $E$ represents all the communications as each edge $e \in E$ connects a pair of communicating nodes.

In this paper, we use the notation $\mathcal{G}_x(t^\prime,t)$ to denote the connected component containing node $x$ in the communication graph drawn for the communications that took place between the time interval starting from time $t^\prime$ (inclusive) and ending at time $t$ (exclusive). Also, if $t$ is referred to as the current time, we often write $\mathcal{G}_x(t^\prime)$ instead of $\mathcal{G}_x(t^\prime,t)$.

\textbf{Definition (\textit{Working Set Number}).} Then the \emph{working set number} for nodes $u$ and $v$ at time $t$ is defined as follows:

\begin{itemize}
\item \textbf{If nodes $u$ and $v$ communicated earlier:} Let $t^\prime$ be the last time nodes $u$ and $v$ communicated. The working set number for nodes $u$ and $v$ at time $t$ is the number of nodes in the connected component in $\mathcal{G}_u(t^\prime,t)$.

\item \textbf{If nodes $u$ and $v$ never communicated earlier:} Let $\mathcal{G}_u(0,t)$ be the connected component of node $u$ in the communication graph drawn for all the communications until time $t$ (exclusive). If $v$ is a node in $\mathcal{G}_u(0,t)$, the working set number for nodes $u$ and $v$ at time $t$ is the number of nodes in $\mathcal{G}_u(0,t)$. Otherwise (if $v$ is not a node in $\mathcal{G}_u(0,t)$), the working set number is $\max (2^d, |V_u| + |V_v|)$, where $d$ is the tree distance between nodes $u$ and $v$ in the hypercube at time $t$, and $V_u$ and $V_v$ are the set of vertices in $\mathcal{G}_u(0,t)$ and $\mathcal{G}_v(0,t)$, respectively.

\end{itemize}

As an example, for the latest communication request $(u,v)$ shown in figure \ref{fig:wset}(a),
the corresponding communication graph $G$ is shown in figure \ref{fig:wset}(b). The number of distinct
nodes in $G$ that have a path from either $u$ or $v$ is 5; therefore the working set number for
the communication request is 5.

We denote the working set number for node pair $(u,v)$ at time $t$ as $T_t(u,v)$.

\noindent
\textbf{Definition (\textit{Working Set Property}).} For a hypercube $\mathcal{N}_i$ at time $i$, the
\emph{working set property} for any node pair $(x,y)$ holds if and only if $d_{Tree}(\mathcal{N}_t, (x,y)) \leq \left \lceil \log_2 T_i(x,y) \right \rceil$.

\noindent
\textbf{Definition (\textit{Working Set Bound}).} We define the \emph{working set bound} as
$WS(\sigma) = \sum_{i=1}^{m} \left \lceil \log_2( T_i(\sigma_i)) \right \rceil$.

\begin{figure}[!t]
\def \subfigcapskip{0pt}
    \centering
    \subfigure[An access pattern showing a repeating communications between $u$ and $v$.]
    {\includegraphics[width=0.45\columnwidth]{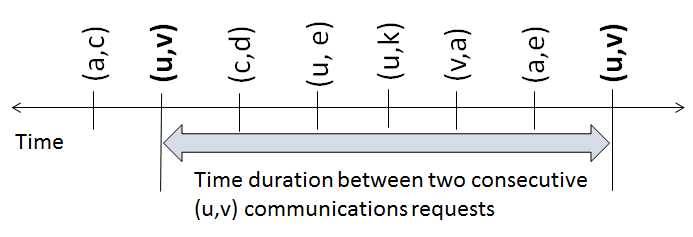}}
    \centering
    \hspace{0.1in}
    \subfigure[Communication graph $G$ for the time duration shown in (a).]
    {\includegraphics[width=0.35\textwidth]{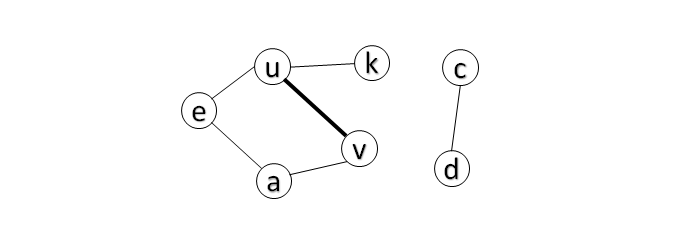}}
    \caption{For the access pattern shown in (a), the working set number for the last communication between $u$ and $v$ is 5, as the the number of distinct nodes in $G$ that has a path from either $u$ or $v$ is 5 (e,a,k,u and v).}
    \label{fig:wset}
\end{figure}

\subsection{Correlation with Skip Graphs} \label{sec:ws_sig}
\noindent

In our prior work \cite{DSG}, we showed that there exists a communication sequence for which the routing cost for any algorithm designed for skip graphs and conforming to our model is at least $WS(\sigma)$ rounds. We used this result to show that the worst case routing cost incurred by our algorithm for skip graphs is at most a constant factor more than that of the optimal algorithm. Although our computational and self-adjusting model is similar for hypercubes, we cannot directly use this result for hypercubes primarily because the structural difference between skip graphs and hypercubes, as described below.

For skip graphs, the self adjusting model requires that the communicating nodes move to a linked list of size 2 upon routing. One of the key differences between skip graphs and hypercubes is that, a node is a part of exactly one linked list of size 2 in a skip graph, whereas a node is a part of $N$ different 2-dimensional hypercubes in an $N$-dimensional hypercube. For example, in the 3-dimensional hypercube in Figure \ref{fig:hypercube_tree}, the node 000 is a part of three 2-dimensional hypercubes (i.e. (000,001), (000, 010) and (000,110)). However, observe that only one 2-dimensional hypercube consisting of node 000 is mapped by a node in the binary tree at level 2.

\begin{lemma}
\label{lemma:ws}
\textbf{(Tree Distance Lemma)} For any node $u$ in an $N$-dimensional hypercube $\mathcal{N}_t$ at any time $t$, there exists a node $v$ in $\mathcal{N}_t$ such that $d_{Tree}(\mathcal{N}_t, (u,v)) \geq \left \lceil \log_2 T_t(u,v) \right \rceil$.

\end{lemma}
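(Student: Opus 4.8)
The plan is to prove the lemma by exhibiting an explicit witness $v$, namely any node lying in the level-$1$ complementary subtree $\sim s^u_1$ of $u$. First I would note that $\sim s^u_1$ is an $(N-1)$-dimensional hypercube, so it contains $2^{N-1}\ge 1$ nodes and such a $v$ exists (we may assume $N\ge 1$, i.e.\ $n\ge 2$, since otherwise there is no other node and nothing to prove). Since $u$ lies in $s^u_1$ and the two halves $s^u_1$ and $\sim s^u_1$ of $\mathcal{N}_t$ are disjoint, $v\neq u$, and the only node of the tree representation whose subtree contains both $u$ and $v$ is the root, which sits at level $0$: any subtree at level $\ge 1$ is contained in exactly one of $s^u_1$, $\sim s^u_1$ and therefore cannot contain both. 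Hence $L_{lca}(u,v)=0$, and by the definition of tree distance $d_{Tree}(\mathcal{N}_t,(u,v)) = N-0 = N$.

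The remaining step is to show $T_t(u,v)\le n = 2^N$, after which $\lceil \log_2 T_t(u,v)\rceil \le \lceil \log_2 2^N\rceil = N = d_{Tree}(\mathcal{N}_t,(u,v))$ completes the proof. I would check this against the three cases of the working-set-number definition. If $u$ and $v$ communicated earlier, or if $v\in\mathcal{G}_u(0,t)$ but they never communicated directly, then $T_t(u,v)$ is the number of nodes of some connected component of a communication graph over the $n$-element vertex set $V$, so $T_t(u,v)\le n$. If instead $v\notin\mathcal{G}_u(0,t)$, then $T_t(u,v) = \max(2^{d}, |V_u|+|V_v|)$ with $d = d_{Tree}(\mathcal{N}_t,(u,v)) = N$: the first term equals $2^N = n$, and in the second term $V_u$ and $V_v$ are the vertex sets of two distinct, hence disjoint, connected components of a communication graph over $V$, so $|V_u|+|V_v|\le n$; thus $T_t(u,v)=n$ in this case as well.

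I do not expect a serious obstacle: the argument is essentially the observation that the tree-antipodal node of $u$ always sits at the maximal tree distance $N$, which dominates $\lceil \log_2 T_t(u,v)\rceil$ because the working set number can never exceed the total node count $n$. The only place needing a moment of care is the third case above, where one must invoke both that the tree distance is at most $N$ (so the $2^d$ term does not overshoot $n$) and the disjointness of the two communication components (so the $|V_u|+|V_v|$ term does not either). It is also worth remarking that for this witness the inequality is tight exactly when $T_t(u,v)=n$, for instance when $u$ and $v$ have never communicated and lie in communication components that together cover all of $V$.
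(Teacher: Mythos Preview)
Your proof is correct: choosing any $v\in\sim s^u_1$ forces $d_{Tree}(\mathcal{N}_t,(u,v))=N$, and since in every branch of the working-set definition $T_t(u,v)\le n=2^N$, you get $\lceil\log_2 T_t(u,v)\rceil\le N$. Your case analysis for the third branch (using $2^d=2^N$ and disjointness of the two components) is exactly the right check.

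This is, however, a genuinely different argument from the paper's. The paper does not pick the tree-antipodal node; instead it works entirely inside $V_u$, the connected component of $u$ in the full communication graph. Writing $2^{k-1}<|V_u|\le 2^k$, it uses a pigeonhole count: at most $2^{k-1}$ nodes of $V_u$ have working-set number $\le 2^{k-1}$, so at least $|V_u|-2^{k-1}$ of them have working-set number in $(2^{k-1},2^k]$; since only $2^{k-1}-1$ slots of tree distance $<k$ are available besides $u$, some node $x\in V_u$ must satisfy $d_{Tree}(u,x)\ge k\ge\lceil\log_2 T_t(u,x)\rceil$.

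What each approach buys: your argument is shorter and establishes the lemma exactly as stated with a one-line witness. The paper's argument is more involved but produces a witness $v\in V_u$ whose working-set number is roughly $|V_u|$ rather than the trivial maximum $n$; the inequality it yields is typically strict and informative, whereas yours always collapses to $N\ge N$. This matters for the downstream use in the Working Set Lemma, where the proof builds a set $S_u$ of nodes with $\lceil\log_2 T_t(u,x)\rceil\le\lceil\log_2 T_t(u,v)\rceil$ and then relates $|S_u|$ to a partial binomial sum to extract the $O(\log\log n)$ factor. With your witness that threshold is $N$, so $S_u$ contains essentially everything and the counting step degenerates. So while your proof fully discharges the lemma in isolation, the paper's pigeonhole argument is doing double duty: it proves the statement and simultaneously supplies the nontrivial witness the next lemma's argument leans on.
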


\begin{proof} Let $\mathcal{G}_u(0,t)$ be the connected component of node $u$ in the communication graph drawn for all the communications until time $t$ (exclusive), and $V_u$ be the set of vertices in $\mathcal{G}_u(0,t)$. If $|V_u| = 1$, then $u$ never communicated and the lemma holds trivially. Let us assume $|V_u| > 1$.  We argue that there exists a node $v \in V_u$ such that $d_{Tree}(\mathcal{N}_t, (u,v)) \geq \left \lceil \log_2 T_t(u,v) \right \rceil$.

Obviously, there exists a $k$, such that $1 \leq k \leq N$ and $2^{k-1} < |V_u| \leq 2^k$. This implies that there is a non-empty set of nodes $S_x$, such that $\forall x \in S_x : 2^{k-1} < T_t(u,x) \leq 2^k$, and $|S_x| \geq  |V_u| - 2^{k-1}$. Now, if $\exists x \in S_x : d_{Tree}(\mathcal{N}_t, (u,x)) \geq k$, then the lemma holds immediately. Otherwise, $\forall x \in S_x : d_{Tree}(\mathcal{N}_t, (u,x)) < k$, and  $|V_u| - 2^{k-1}$ nodes with working set number less than or equal to $2^{k-1}$ must have a tree distance $k$ or more. Hence the lemma follows.
\end{proof}

\begin{lemma}
\label{lemma:tree_distance}
\textbf{(Working Set Lemma)} For any node $u$ in an $N$-dimensional hypercube $\mathcal{N}_t$ at any time $t$, there exists a node $v$ in $\mathcal{N}_t$ such that $d(\mathcal{N}_t, (u,v)) \geq \frac{\left \lceil \log_2 T_t(u,v) \right \rceil}{O(\log \log n)}$, where $n = 2^N$.

\end{lemma}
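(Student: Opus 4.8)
The plan is to re-run the counting argument from the proof of the Tree Distance Lemma, but with the level-$d$ subtree of $u$ (which is exactly the set of nodes at tree distance $\le d$ from $u$) replaced by a radius-$r$ Hamming ball around $u$ (the set of nodes at routing distance $\le r$, noting that routing distance equals Hamming distance minus one). The decisive quantitative difference is that the level-$(N-k+1)$ subtree of $u$ has exactly $2^{k-1}$ nodes, whereas the radius-$r$ ball has only $B(N,r):=\sum_{i=0}^{r}\binom{N}{i}\le(N+1)^{r}$ nodes; taking $r\approx(k-1)/\log_2(N+1)$ keeps $(N+1)^{r}\le 2^{k-1}<|V_u|$, so $V_u$ still cannot be contained in the ball. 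That exponential gap between a tree-distance ball and a routing-distance ball of the same radius is precisely the source of the extra $\log\log n$ factor.

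First I would dispose of the degenerate regime where $|V_u|$ is only polylogarithmic in $n$ (including $|V_u|=1$, i.e. $u$ never communicated): here I take $v$ to be the antipode $\bar u$ of $u$ in $\mathcal{N}_t$, which is $N$ Hamming steps from $u$. Then $d(\mathcal{N}_t,(u,\bar u))=N-1$, while $T_t(u,\bar u)\le 2^{N}=n$ forces $\lceil\log_2 T_t(u,\bar u)\rceil\le N$, hence $d(\mathcal{N}_t,(u,\bar u))\ge\tfrac12\lceil\log_2 T_t(u,\bar u)\rceil$, which already yields the claim once the hidden constant satisfies $O(\log\log n)\ge 2$.

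For the main regime, let $k$ be the integer with $2^{k-1}<|V_u|\le 2^{k}$ and recall, exactly as in the Tree Distance Lemma, that every $x\in V_u$ satisfies $T_t(u,x)\le|V_u|\le 2^{k}$ and therefore $\lceil\log_2 T_t(u,x)\rceil\le k$. Put $r=\big\lfloor(k-1)/\log_2(N+1)\big\rfloor$, so that $(N+1)^{r}\le 2^{k-1}<|V_u|$; then $B(N,r)\le(N+1)^{r}<|V_u|$, so some $v\in V_u$ lies at Hamming distance more than $r$ from $u$, i.e. $d(\mathcal{N}_t,(u,v))\ge r$ (and as a bonus $v\in V_u$, paralleling the Tree Distance Lemma). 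In this regime $k\ge 2\log_2(N+1)+2$, which makes $r\ge k/\big(2\log_2(N+1)\big)$; together with $\lceil\log_2 T_t(u,v)\rceil\le k$ this gives $d(\mathcal{N}_t,(u,v))\ge\lceil\log_2 T_t(u,v)\rceil/\big(2\log_2(N+1)\big)$, and since $\log_2(N+1)=\log_2(\log_2 n+1)=O(\log\log n)$ the lemma follows.

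The only friction I anticipate is bookkeeping: choosing the Hamming-ball estimate ($B(N,r)\le(N+1)^{r}$, or the sharper entropy bound $B(N,r)\le 2^{NH(r/N)}$) tightly enough that the ``polylog versus super-polylog'' threshold between the two cases is consistent, and verifying that the floor-function slack $\big\lfloor(k-1)/\log_2(N+1)\big\rfloor\ge k/\big(2\log_2(N+1)\big)$ really holds throughout the range of $k$ where it is invoked. Nothing deeper is needed than the single observation that a routing-distance ball of radius $r$ is exponentially smaller than the tree-distance ball (subtree) of the same radius, so a $\log\log n$ loss relative to the Tree Distance Lemma is intrinsic to this route.
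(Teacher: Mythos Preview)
Your argument is correct and rests on the same core idea as the paper's proof: bound the volume of a Hamming ball of radius $r$ around $u$ by $2^{O(r\log\log n)}$, then use pigeonhole to force some node with small working-set number to lie outside the ball. The differences are organizational rather than mathematical. The paper first invokes Lemma~\ref{lemma:ws} (the Tree Distance Lemma) as a black box to produce a node $v$ with large tree distance, and only then compares $|S_u|$ against the ball volume; you instead rerun the Lemma~\ref{lemma:ws} counting argument directly with Hamming balls in place of subtrees, which makes the proof self-contained and avoids the somewhat opaque case split at the end of the paper's argument. On the quantitative side, the paper uses the sharper estimate $\binom{N}{k}\le(eN/k)^{k}$ together with a separate tail-sum bound (their Equations~(\ref{eq:hyp_opt_thm})--(\ref{eq:binom_sum_2})), whereas your cruder $B(N,r)\le(N+1)^{r}$ already suffices and keeps the constants explicit. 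Your explicit treatment of the small-$|V_u|$ regime via the antipode is also cleaner than what the paper does (it essentially leaves that regime implicit). The bookkeeping worries you flag---consistency of the polylog threshold and the floor slack $\lfloor(k-1)/\log_2(N+1)\rfloor\ge k/(2\log_2(N+1))$---are genuine but routine once you fix the crossover at, say, $k\ge 2\log_2(N+1)+2$ as you propose.
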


\begin{proof}
We know that in hypercube the distance between nodes $u$ and $v$ is $k$ if and only if the \emph{Hamming distance} between their coordinates is $k$. Thus, in a $N$-dimensional hypercube, the number of nodes with distance $k$ from any node is $N$ choose $k$. Note that, $N = \log_2 n$. Using the upper bound of binomial coefficients and rule of logarithms, we get:

\begin{equation}
\label{eq:hyp_opt_thm}
\begin{split}
\binom{N}{k} = & \binom{\log_2 n}{k}  \leq  \Bigg( \frac{e\cdot \log_2 n}{k} \Bigg)^k = 2^{k \cdot \log_2 \big( \frac{e\cdot \log_2 n}{k} \big)} = 2^{k \log_2 \big( \frac{e}{k} \big) + k \log_2 \big(\log_2 n \big)} = 2^{O(k \cdot \log \log n)}
\end{split}
\end{equation}

Thus, the number of nodes $x$ in $\mathcal{N}_t$ such that $d(\mathcal{N}_t, (u,x)) \leq k$ and $x \neq u$ is $\sum_{i=1}^{k} \binom{N}{i}$. From \cite{binom_sum}, we know,

\begin{equation}
\label{eq:binom_sum}
\begin{split}
\sum_{i=0}^{k} \binom{N}{i} \leq \binom{N}{k} \frac{N-(k-1)}{N-(2k-1)}
\end{split}
\end{equation}

If $k \leq \frac{N}{4}$,

\begin{equation}
\label{eq:binom_sum_coef}
\begin{split}
\frac{N-(k-1)}{N-(2k-1)} < \frac{N-((N/4)-1)}{N-(2(N/4)-1)} < \frac{N-(N/4)}{N-(N/2)} = \frac{3}{2}
\end{split}
\end{equation}

Using Equations \ref{eq:binom_sum}, \ref{eq:binom_sum_coef} and \ref{eq:hyp_opt_thm}, when $k \leq \frac{N}{4}$:

\begin{equation}
\label{eq:binom_sum_2}
\begin{split}
\sum_{i=0}^{k} \binom{N}{i} \leq \frac{3}{2}\binom{N}{k} = 2^{O(k \cdot \log \log n)}
\end{split}
\end{equation}

Also, if $k > \frac{N}{4}$, clearly, $\sum_{i=0}^{k} \binom{N}{i} \leq 2^{4k}$, since $n=2^N$. Thus, $\sum_{i=1}^{k} \binom{N}{i} = 2^{O(k \cdot \log \log n)}$ for any $k$, $1 \leq k \leq N$.

From Lemma \ref{lemma:ws}, we know that there exists a node $v$ in $\mathcal{N}_t$ such that $d_{Tree}(\mathcal{N}_t, (u,v)) \geq \left \lceil \log_2 T_t(u,v) \right \rceil$. Let $S_u$ be the set of nodes in $\mathcal{N}_t$ such that $\forall x \in S_u: \left \lceil \log_2 T_t(u,x) \right \rceil \leq \left \lceil \log_2 T_t(u,v) \right \rceil$. Let $d(\mathcal{N}_t, (u,v)) = d+1$. Now, if $\forall x \in S_u: d(\mathcal{N}_t, (u,x)) \leq d(\mathcal{N}_t, (u,v))$, then

\begin{equation}
\label{eq:binom_sum_3}
\begin{split}
\sum_{i=0}^{d} \binom{N}{i} = 2^{\left \lceil \log_2 T_t(u,v) \right \rceil - 1}
\end{split}
\end{equation}

From Equations \ref{eq:binom_sum_2} and \ref{eq:binom_sum_3}, we get:

\begin{equation}
\label{eq:binom_sum_4}
\begin{split}
2^{O(d \cdot \log \log n)} = 2^{\left \lceil \log_2 T_t(u,v) \right \rceil - 1}
\end{split}
\end{equation}

Hence, $d = \frac{\left \lceil \log_2 T_t(u,v) \right \rceil}{O(\log \log n)}$. However, if $\exists x \in S_u: \left \lceil \log_2 T_t(u,x) \right \rceil > \left \lceil \log_2 T_t(u,v) \right \rceil$,
similar argument can be used to show that $d(\mathcal{N}_t, (u,x)) \geq \frac{\left \lceil \log_2 T_t(u,v) \right \rceil}{O(\log \log n)}$.
\end{proof}

\begin{theorem}
\label{theorem:hypercube_ws}
\textbf{(Hypercube Optimality Theorem)} There exists a communication sequence $\sigma = ( \sigma_1, \sigma_2, \cdots \sigma_m)$ such that the routing cost of any self-adjusting algorithm conforming to our model for a $(\log n)$-dimensional hypercube is at least $\frac{WS(\sigma)}{O(\log \log n)}$ rounds.

\end{theorem}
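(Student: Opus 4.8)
The plan is to fix an arbitrary self‑adjusting algorithm $\mathcal{A}$ conforming to the model and build a worst‑case communication sequence $\sigma$ adaptively against it, then lower–bound $\mathcal{A}$'s routing cost on $\sigma$ request by request using the Working Set Lemma (Lemma~\ref{lemma:tree_distance}). This is the same skeleton as the skip‑graph lower bound of \cite{DSG}, except that the per‑request guarantee is weaker by an $O(\log\log n)$ factor: a Hamming ball of radius $k$ in an $N$‑cube already contains $2^{O(k\log\log n)}$ nodes (the estimate \ref{eq:hyp_opt_thm}), so a ``tree‑far'' node need not be ``far'' in the hypercube, and this is exactly the loss quantified in Lemma~\ref{lemma:tree_distance}. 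So the theorem will follow by replacing the skip‑graph per‑request bound with Lemma~\ref{lemma:tree_distance}.

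Concretely, I would fix any node $u\in V$ and any initial hypercube $\mathcal{N}_0$, and define $\sigma=(\sigma_1,\dots,\sigma_m)$ inductively. Suppose $\sigma_1,\dots,\sigma_{t-1}$ have been chosen; then $\mathcal{A}$'s behavior (routing plus transformations) determines the current hypercube $\mathcal{N}_t$, and the working set numbers $T_t(u,\cdot)$ are fully determined by the past requests together with $\mathcal{N}_t$. Apply Lemma~\ref{lemma:tree_distance} to node $u$ in $\mathcal{N}_t$ to obtain a node $v_t\neq u$ with $d(\mathcal{N}_t,(u,v_t))\geq \lceil\log_2 T_t(u,v_t)\rceil/O(\log\log n)$, and set $\sigma_t=(u,v_t)$. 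This is well defined and not circular: $T_t(u,v_t)$ depends only on $\sigma_1,\dots,\sigma_{t-1}$ and $\mathcal{N}_t$, all fixed before $\sigma_t$ is committed, and it is precisely the working set number entering $WS(\sigma)$ for this request.

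The per‑request cost bound is then the routine part. To serve $\sigma_t$, $\mathcal{A}$ must establish communication between $u$ and $v_t$ in $\mathcal{N}_t$; since each link carries at most one message per round in the $\mathcal{CONGEST}$ model, a message cannot reach a node lying $d(\mathcal{N}_t,\sigma_t)+1$ edges away in fewer than $d(\mathcal{N}_t,\sigma_t)+1$ rounds, so the routing portion of the cost of serving $\sigma_t$ is at least $d(\mathcal{N}_t,\sigma_t)$. Summing over $t=1,\dots,m$ and using the choice of $v_t$,
\begin{equation*}
\sum_{t=1}^{m} d(\mathcal{N}_t,\sigma_t)\;\geq\;\sum_{t=1}^{m}\frac{\lceil\log_2 T_t(u,v_t)\rceil}{O(\log\log n)}\;=\;\frac{1}{O(\log\log n)}\sum_{t=1}^{m}\big\lceil\log_2 T_t(\sigma_t)\big\rceil\;=\;\frac{WS(\sigma)}{O(\log\log n)},
\end{equation*}
which is the claimed bound. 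As usual, the statement means: for every $\mathcal{A}$ there is such a $\sigma$ (depending on $\mathcal{A}$).

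The main obstacle is not the combinatorics — that is already packaged inside Lemma~\ref{lemma:tree_distance} — but making the adaptive construction airtight: after $\mathcal{A}$ transforms $\mathcal{N}_t$ into $\mathcal{N}_{t+1}$ (which, by the model, places $u$ and $v_t$ adjacent and may reshuffle everything else), one must check that Lemma~\ref{lemma:tree_distance} still applies to $u$ in $\mathcal{N}_{t+1}$ to furnish the next target $v_{t+1}$, and that the working set numbers steering the selection are literally the summands of $WS(\sigma)$. A secondary point to state explicitly, since the theorem bounds only the routing component, is the $\mathcal{CONGEST}$ routing lower bound — that delivering a message requires at least as many rounds as the graph distance — which is intuitive but should be invoked formally; here it is automatic because the shortest path between two hypercube nodes is exactly their Hamming distance and the cost model charges the routing distance itself.
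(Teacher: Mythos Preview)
Your proposal is correct and follows essentially the same approach as the paper: both arguments fix a source $u$, adaptively build $\sigma$ against the given algorithm, and at each step invoke Lemma~\ref{lemma:tree_distance} to select a target $v_t$ whose routing distance is at least $\lceil\log_2 T_t(u,v_t)\rceil/O(\log\log n)$, then sum. The paper phrases the construction as starting from an arbitrary base sequence $\sigma'$ and replacing each request by the ``bad'' pair supplied by Lemma~\ref{lemma:tree_distance} when necessary, but this is cosmetically the same as your direct inductive choice of $v_t$.
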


\begin{proof}
We start with any communication sequence $\sigma^\prime = ( \sigma_1^\prime, \sigma_2^\prime, \cdots \sigma_m^\prime)$, where $\sigma_i^\prime = (u, v)$. We construct $\sigma$ from $\sigma^\prime$ as follows. For any $i$, if $d(\mathcal{N}_t, (u, v)) \geq \frac{\left \lceil \log_2 T_t(u,v) \right \rceil}{O(\log \log n)}$, then we set $\sigma_i = (u,v)$. Otherwise, we know from Lemma \ref{lemma:tree_distance} that there exits a node $v^\prime$ such that $d(\mathcal{N}_t, (u, v^\prime)) \geq \frac{\left \lceil \log_2 T_t(u,v) \right \rceil}{O(\log \log n)}$. We set $\sigma_i = (u,v^\prime)$.

By construction, $\sum_{i=1}^{m} d(\mathcal{N}_t, \sigma_i) \geq \frac{WS(\sigma)}{O(\log \log n)}$.

\end{proof}

\presec
\section{Proposed Algorithm} \label{sec:dyhyp}
\postsec

\subsection{Algorithm Overview and Notations}
We propose a randomized and distributed self-adjusting algorithm \RSG~ for hypercubic networks to perform topological adaptation to unknown communication patterns. Upon a communication request, our algorithm \RSG~ first establishes communication using the standard routing algorithm of the network, and then partially transforms the topology according to our self-adjusting model. we refer to the hypercubic network at time as $\mathcal{N}_t$.

Let $N$ be the dimension of the given hypercubic network, and hence $N$ is the height of the tree modeling of the network. Our algorithm \RSG~ allows nodes to form groups at different levels to keep frequently communicating nodes together in the network.


\textbf{Definition (Group).} At any time $t$, a set of nodes $S$ are considered to be in a group at level $d$ if the following are true:

\begin{enumerate}
\item There exist a start coordinate $\mathcal{S}_d$ and an end coordinate $\mathcal{E}_d$ such that $\mathcal{E}_d \geq \mathcal{S}_d$, and the number of nodes in $S$ is exactly $\mathcal{E}_d - \mathcal{S}_d + 1$. All the nodes in $S$ are positioned together in the positions ranged by coordinates $[\mathcal{S}_d,\mathcal{E}_d]$ in network $\mathcal{N}_t$.

\item All the nodes in $S$ are placed in the same level-$d$ subtree.



\end{enumerate}

It is possible that nodes of a group at a level are split into multiple groups at upper levels. Figure \ref{fig:dyhyp_ex_groups} shows an example of groups at different levels in a hypercubic network.

Our algorithm \RSG~ maps the groups linearly in the network. In other words, we consider the node with the lowest coordinate as the left most node of the network, and the node with the highest coordinate as the right most node of the network, and all other nodes are ordered from left to right in the ascending order of their coordinates. As the coordinates of nodes change during transformation when nodes change their position, our algorithm ensures that all nodes of a group are placed in the network adjacently. This implies that each group has a start and an end coordinates (the coordinates of the left-most and right-most nodes of the group). For a group at level $d$, the start and end coordinates are denoted as $S^x_d$ and $E^x_d$, respectively,  where $x$ is any node in the group.

\textbf{Definition ($d$-Related groups).} Let $g$ be a group at a level $d+1$ such that $g$ contains a node $x$ and all nodes of $g$ are positioned in the subtree $s^x_{d+1}$ in network $\mathcal{N}_t$. Let $t^\prime$ be a time such that $t^\prime > t$. If group $g$ is broken into two groups $g_1$ and $g_2$ in network $\mathcal{N}_t^\prime$, such that all nodes of $g_1$ and $g_2$ are positioned in subtrees $s^x_{d+1}$ and $\sim s^x_{d+1}$, respectively, then groups $g_1$ and $g_2$ are said to be $d$-related groups.

Also, if two groups are $d$-related, we say they are $d$-relatives of each other.


\subsection{Major Algorithmic Challenges}
\label{sec:dyhyp_challenges}

There are few algorithmic challenges that we need to address. First, when two nodes from two different groups communicate, we merge the communicating groups. While merging, we need to move other groups to make room for the merging groups. This may result in violation of working set property for any non-communicating groups moved due to the transformation.

Second, it is possible that nodes of multiple groups belong to the same connected component in the communication graph drawn for all communications until the time of current communication. This is especially true when multiple groups are split from the same group at a lower level, as shown in Figure \ref{fig:dyhyp_ex_groups}. Thus, while moving one of such groups to a new position in the network, a challenge is not to violate the working set property for the nodes across the groups.

Third, in hypercubic networks, nodes of a subtree at any level are split into two equally sized subtrees at the immediate upper level. While merging two groups, it is possible that the merged group is too big to fit in a subtree. In that case we need to split the merged group, which may result in violation of the working set property of the nodes of the split groups.

\subsection{Our Key Idea to Address the Challenges}

Our algorithm \RSG~ form groups at different levels to keep frequently communicating nodes together. Due to the structure of hypercubes, a large group at a lower level maybe split into multiple groups at upper levels. Each node has two timestamps for each level, and these timestamps are used to indicate a node's attachment to its group. We use these timestamps to decide how to rearrange (e.g. merge, split, swap etc.) the groups as a part of a transformation.

Our algorithm also focuses on minimizing the number of $d$-relative groups for all levels $d$ at any point of time. Note that, due to the structural rigidity of hypercubic networks, it is impossible to entirely eliminate the existence of the relative groups. For an example, there if a network of 8 nodes has three groups of size 3,3 and 2, there is no way these nodes can be placed in a 3-dimensional hypercube without violating the working set property.

To this end, \RSG~ picks randomly chosen groups and split them into relative groups and at the same time keeps the number of relative groups in the network low. This forces the ``adversary" to try communications within different non-split groups to increase the likelihood of finding a communication between separated relative groups. We refer to the communications across the relative groups as ``bad communications" and any other communications as ``good communications".

Our algorithm \RSG~ charges any communication more than their working set numbers. Thus, ``good communications" are overcharged to pay for the ``bad communications". In other words, algorithm \RSG~ ensures that, the expected number of ``good communications" are large compared to that of the ``bad communications".

To keep number of relative groups low, we maintain the following invariant at all time:

\textbf{Invariant I:} At any time $t$ and in any subtree $s_d$ at any level $d$ in network $\mathcal{N}_t$, there are at most two groups that are $d$-relatives of each other.

To keep track of relative groups, each node holds information about the pair of relative groups for each level. More specifically, for any level $d$, a node $x$ needs to hold four variables: the coordinates of the start (left most) and end (right most) nodes of the group from $d$-relative groups (if exists) in the subtree $s^x_{d+1}$, and the coordinates of the start and end nodes of the other $d$-relative group in the subtree $\sim s^x_{d+1}$. We denote these variables as $\mathcal{S}^x_d$, $\mathcal{E}^x_d$, $\sim \mathcal{S}^x_d$, and $\sim \mathcal{E}^x_d$, respectively. These variables may set to \texttt{NULL} if the corresponding pair of relative groups do not exist.

To address the third challenge listed in Section \ref{sec:dyhyp_challenges}, algorithm \RSG~ uses a randomized approach. When there is a need to evict a subgroup (i.e., a group inside a communicating group) from a subtree to make room for a communicating group, \RSG~ chooses one or more subgroups randomly for eviction. We show that, with this approach, the expected distance between any two nodes remain at most constant factor more than the logarithm of the working set number of the nodes.

\subsection{The Algorithm: Dynamic Hypercubic Networks (\RSG~)}

Upon a communication request and routing, communicating nodes $(u,v)$ record the level of the smallest common subtree that contains both nodes $u$ and $v$. Let the recorded level be $\alpha$. Our algorithm \RSG~ performs transformation conforming to the self-adjusting model to move the communicating nodes in a subtree of size 2. Transformation may take up to three steps depending upon the existing network topology. These three steps are described in the following subsections.

\subsubsection{Subtree Leap} \label{sec:leap}

Let $l(u)$ be the lowest level such that a pair of $l(u)$-relative groups exist in subtree $s^u_{l(u)}$. Similarly, let $l(v)$ be the lowest level such that a pair of $l(v)$-relative groups exist in subtree $s^v_{l(v)}$. This step is executed only if $\alpha$ happens to be lower than both $l(u)$ and $l(v)$. In this step the subtree $s^v_{\min(l(u),l(v))}$ is swapped/replaced by the subtree $\sim s^u_{\min(l(u),l(v))}$. As a result, subtrees $s^u_{\min(l(u),l(v))}$ and $s^v_{\min(l(u),l(v))}$ complement each other. Note that, this will not move any group away from its relative groups.

\subsubsection{Inter-Group Transformation} \label{sec:inter-group}
In this step, we bring the communicating groups adjacent to each other. This step is necessary only if the communicating nodes are from two different groups at level $\alpha$. Let, $g(u)$ be the highest level such that the node $u$'s group at level $g(u)$ is not a strict subset of any other group at any of the levels lower than $g(u)$.
Let $g(v)$ is the similar highest level for node $v$. Let $\alpha$ be the level of the smallest common subtree that contains both nodes $u$ and $v$ after subtree leap. If the communicating nodes belong to different groups at level $\alpha$ (meaning, the group ids $G^u_\alpha$ and $G^v_\alpha$ are different) and either $g(u)$ or $g(v)$ is greater than $\alpha$, this step is executed and as a result node $u$'s and $v$'s groups move next to each other to merge together.

Our algorithm \RSG~ ensures that the Invariant I is preserved after this transformation. If there is any $\alpha$-relative groups in the current network, both nodes $u$ and $v$ are aware of their existence as they hold the information in variables $\mathcal{S}^u_\alpha$, $\sim \mathcal{S}^u_\alpha$ etc. Suppose $A$ and $B$ are the groups of nodes $u$ and $v$ at level $\alpha$, and $C_1$ and $C_2$ are the $\alpha$-relative groups such that groups $A$ and $C_1$ are in subtree $s^u_{\alpha+1}$ and groups $B$ and $C_2$ are in subtree $s^v_{\alpha+1}$, as shown in Figures \ref{fig:ex_intergroup}(a) and \ref{fig:ex_intergroup_random}(a).

Let group $A$ be larger than group $B$ in size (this can be computed using the variables $S^u_\alpha$, $E^u_\alpha$, $S^v_\alpha$ and $E^v_\alpha$), so we move group $B$ next to group $A$ as shown in Figures \ref{fig:ex_intergroup}(c) and \ref{fig:ex_intergroup_random}(c). If the size of group $B$ is larger than the size of group $C_1$, $B$ can be swapped with a portion of $C_1$ as shown in Figure \ref{fig:ex_intergroup}(b). However, if the size of $B$ is smaller than the size of $C_1$, we randomly choose a coordinate $m_r$ in subtree $s^u_{\alpha+1}$ excluding the range of coordinates assigned for the groups $A$ and $C_1$, as shown in Figure \ref{fig:ex_intergroup_random}(a). We take a portion of the network $R$ at the position $m_r$ such that the sum of the size of group $A$ and the size of portion $R$ is greater than the size of group $B$. Then we move group $B$ next to group $A$ as shown in Figure \ref{fig:ex_intergroup_random}(c).

\begin{figure}[htbp]
\def \subfigcapskip{0pt}
\hspace{-0.2in}
\centering
    \subfigure[Groups $A$ and $B$ are communicating and groups $C_1$ and $C_2$ are $\alpha$-relative groups.]
    {\includegraphics[width=0.45\columnwidth]{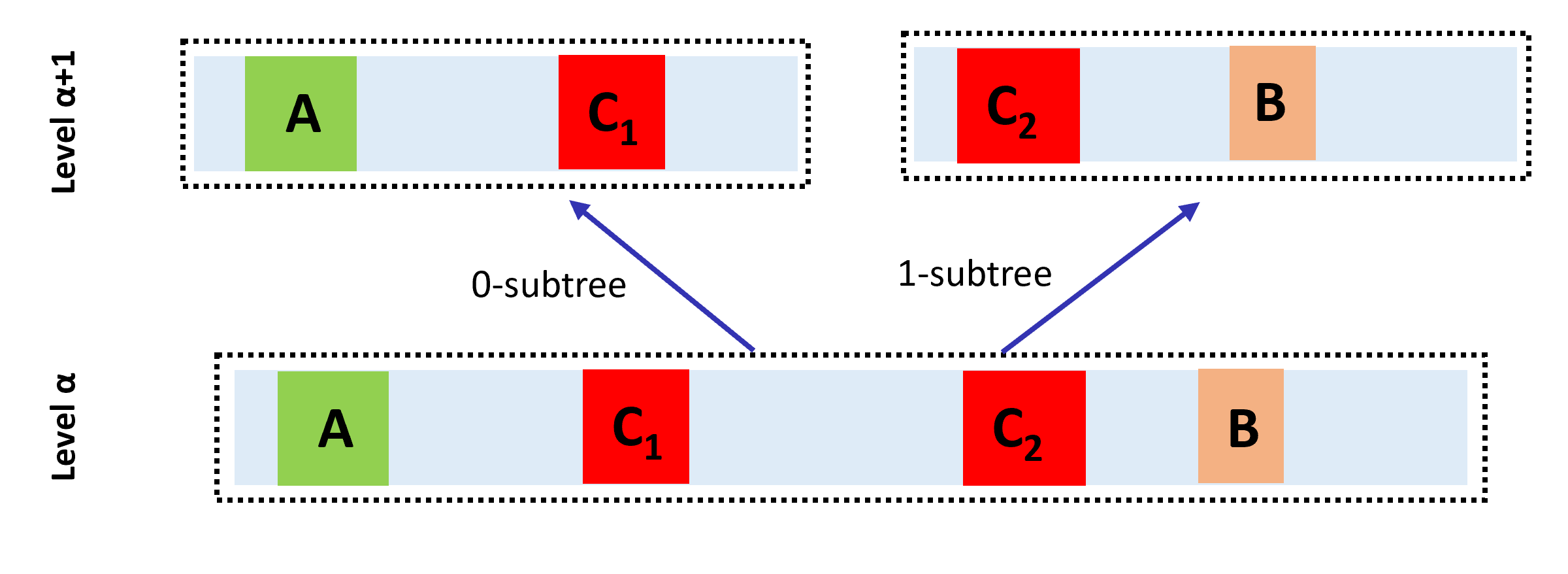}}
\hspace{0.2in}
\centering
    \subfigure[Groups $A$ and $C_1$ move next to each other without violating Invariant I. Similarly groups $B$ and $C_2$ move next to each other in the 1-subtree at level $\alpha + 1$.]
    {\includegraphics[width=0.45\columnwidth]{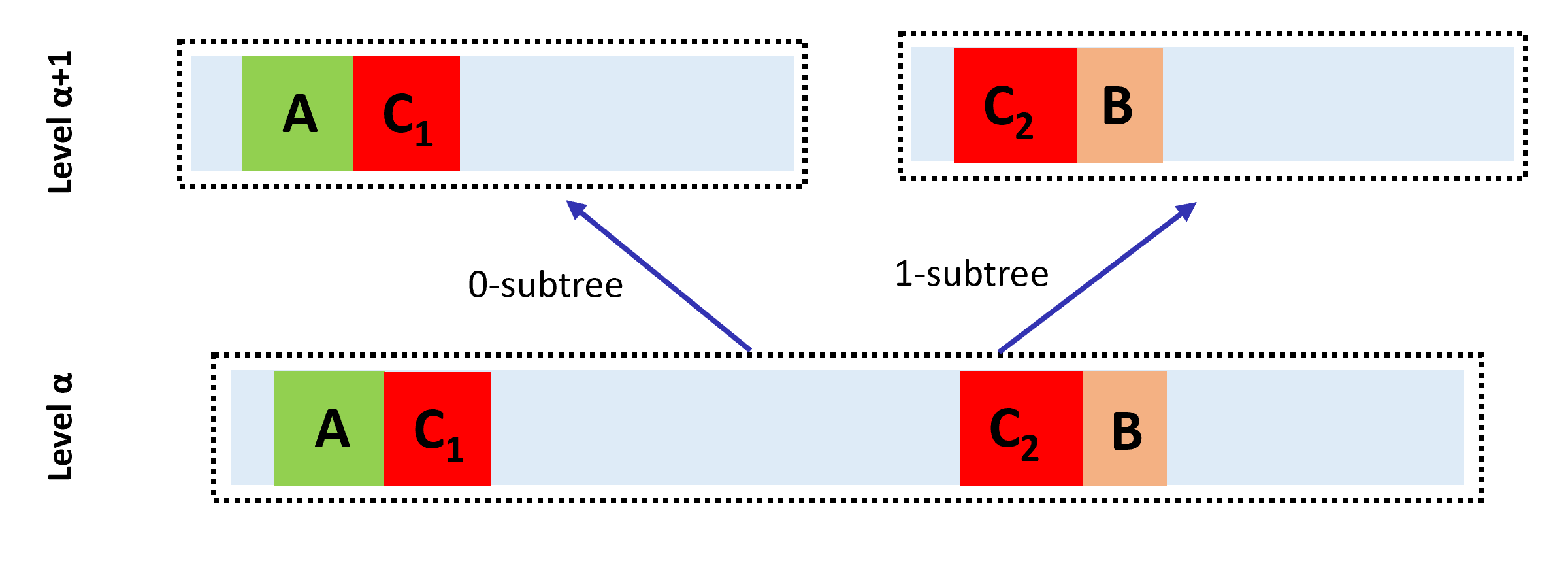}}
\hspace{0.2in}
\centering
    \subfigure[Groups $A$ and $B$ move next to each other without violating Invariant I.]
    {\includegraphics[width=0.45\columnwidth]{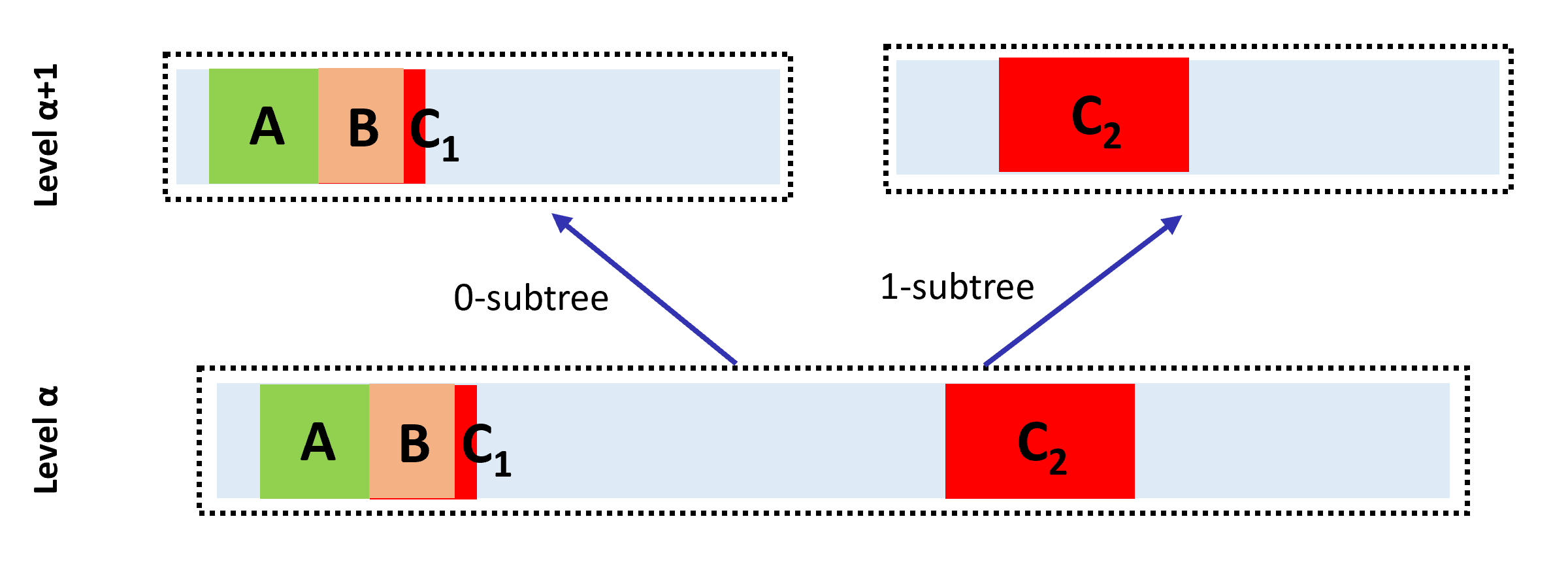}}
\caption{Intergroup transformation example: group $B$ moves next to group $A$, and the size of group $B$ is larger than the size of the $\alpha$-relative group $C_1$.}
\label{fig:ex_intergroup}
\end{figure}

\begin{figure}[htbp]
\hspace{-0.2in}
\def \subfigcapskip{0pt}
\centering
    \subfigure[Groups $A$ and $B$ are communicating and groups $C_1$ and $C_2$ are $\alpha$-relative groups. $R$ is the randomly chosen portion of the network.]
    {\includegraphics[width=0.45\columnwidth]{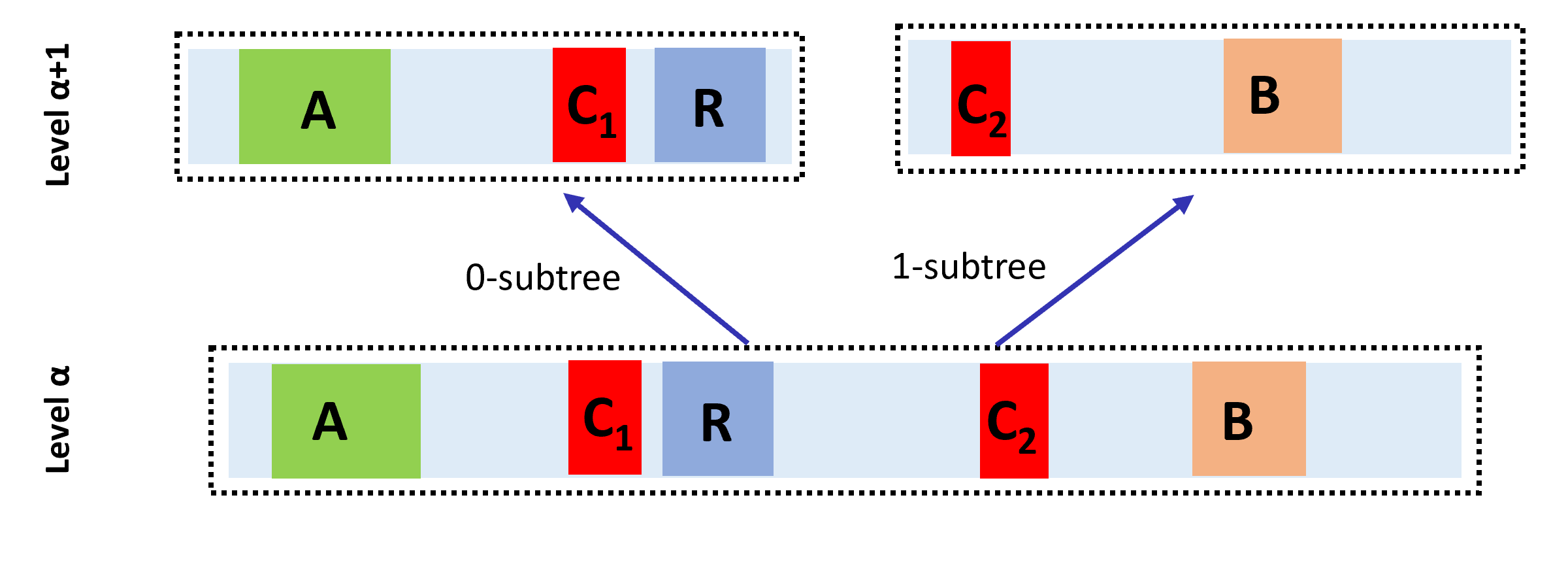}}
\hspace{0.2in}
\centering
    \subfigure[Groups $A$, $C_1$ and $R$ move next to each other without violating Invariant I. Similarly groups $B$ and $C_2$ move next to each other in the 1-subtree at level $\alpha + 1$.]
    {\includegraphics[width=0.45\columnwidth]{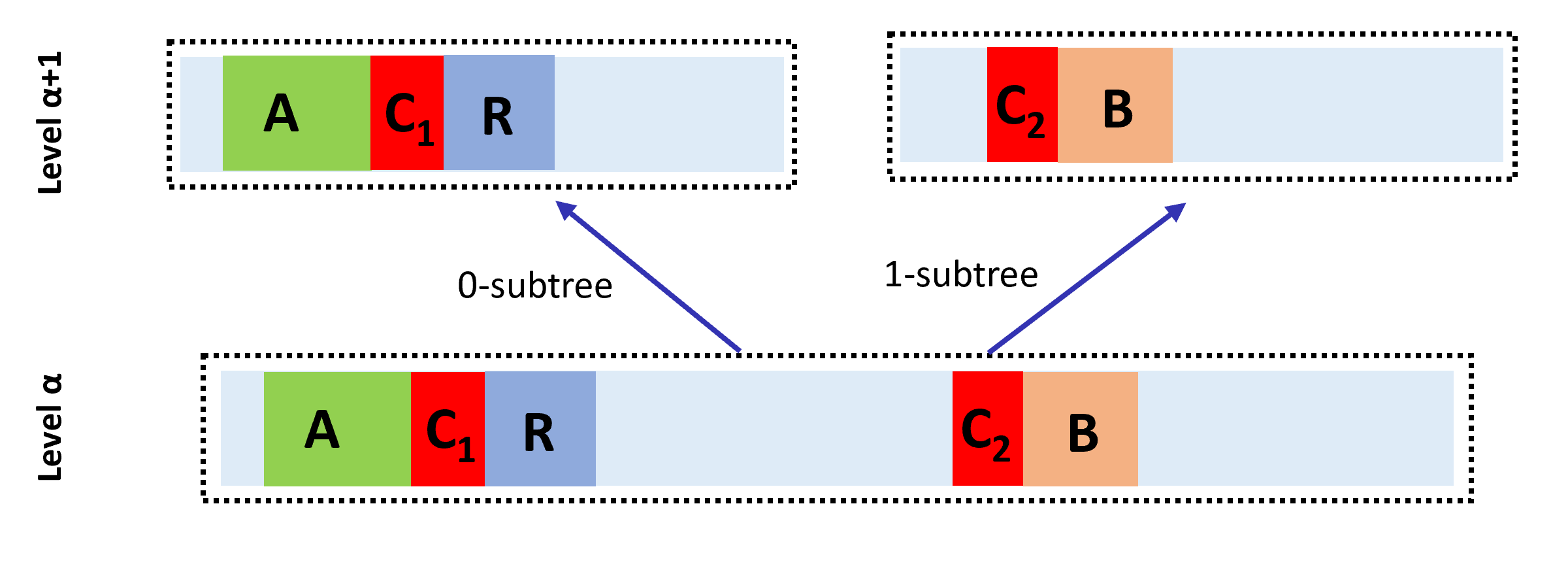}}
\hspace{0.2in}
\centering
    \subfigure[Groups $A$ and $B$ move next to each other without violating Invariant I.]
    {\includegraphics[width=0.45\columnwidth]{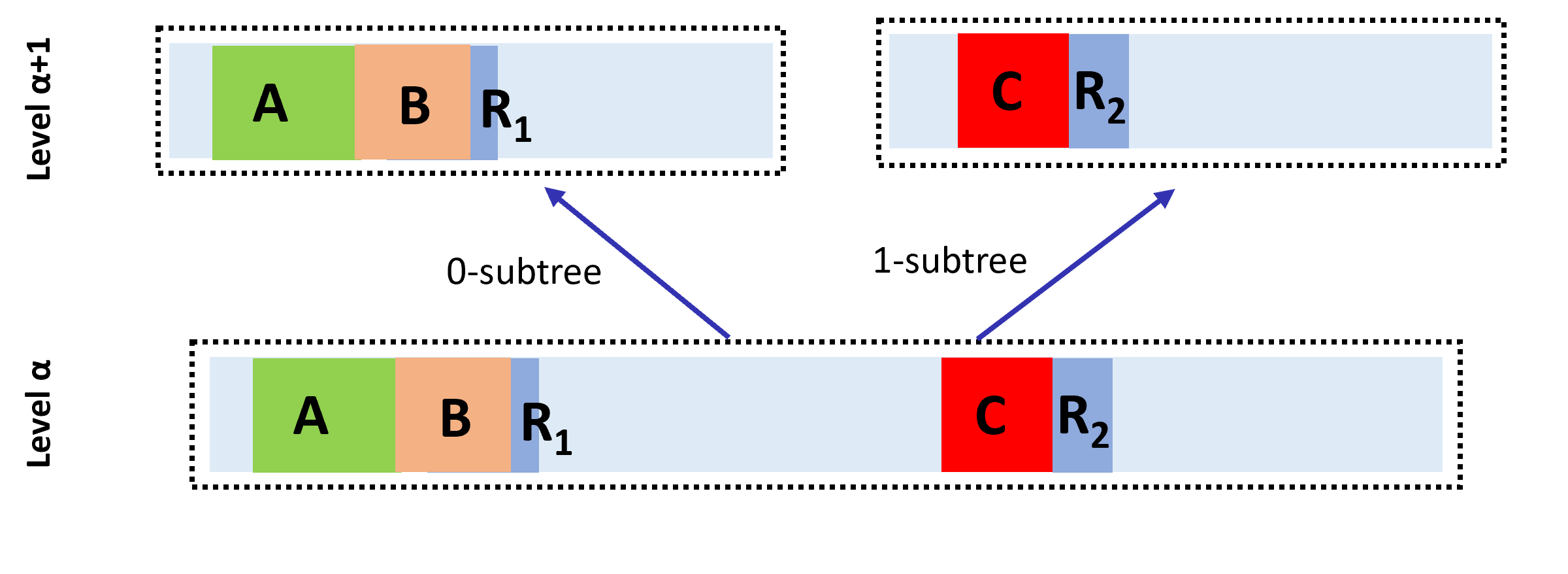}}
\caption{Intergroup transformation example: group $B$ moves next to group $A$, and the size of group $B$ is smaller than the size of the $\alpha$-relative group $C_1$.}
\label{fig:ex_intergroup_random}
\end{figure}

In order to preserve the Invariant I, this procedure requires following cautions:

\begin{enumerate}

\item All the groups involved, including groups $A$, $B$, $C_1$, $C_2$, and groups in portion $R$, cannot be $\alpha+1$-relative. If any of these groups are $\alpha+1$-relative, algorithm \RSG~ moves all the $\alpha+1$-relative groups next to each other recursively.

\item As shown in Figures \ref{fig:ex_intergroup} and \ref{fig:ex_intergroup_random}, moving groups $A$ and $B$ next to each other, may require moving other groups (for example $A$ and $C_1$) next to each other in subtrees $s^u_{\alpha+1}$ and $s^v_{\alpha+1}$. Similarly, these moves may require move moves at the subtrees of upper levels. All these moves take place recursively and parallelly before groups $A$ and $B$ moves next to each other.

\end{enumerate}

Once the groups $A$ and $B$ moves next to each other, they merge into a single group. To merge, all the nodes $x$ of both groups $A$ and $B$ update their necessary group-ids.

\subsubsection{Intra-Group Transformation} \label{sec:intra-group}
The intra-group transformation moves the communicating nodes in a subtree of size 2, as required by our self-adjusting model. Let $\alpha$ be the level of the smallest common subtree that contains both nodes $u$ and $v$ at the beginning of this step. Subtree leap and inter-group transformation ensure that the communicating groups belong to the same group at level $\alpha$. Obviously, the group of nodes $u$ and $v$ at level $\alpha$ is divided into subgroups at upper levels unless nodes $u$ and $v$ are already in a subtree of size 2. Intra-group transformation rearranges these subgroups at different levels using the timestamps stored by the communicating nodes to move nodes $u$ and $v$ in a subtree of size 2.

\textbf{Group-id and Timestamps.} At any time $t$, we require every node to store a \emph{group-id} and two \emph{timestamp}s for each of the $N$ levels. We denote the group-id of node $x$ for level $d$ as $G^x_d$. Each node of a group at some level holds the same group-id for that level, and the timestamps of a node for some level indicates the node's attachment to its group at that level.

Each node has two timestamps for each level, referred to as the T-timestamp and K-timestamp. We denote the T-timestamp and K-timestamp of a node $x$ for a level $d$ as $T^x_d$ and $K^x_d$, respectively. Initially, for any node $x$, $T^x_N = K^x_N = \infty$, and $T^x_i = K^x_i = 0$ for any level $i < N$, where $N$ is the dimension of the entire hypercubic network.

For any node $x$ and level $d$, the algorithm ensures that most of the nodes in the group of node $x$ at level $d$ are likely to be in the connected component $\mathcal{G}_x(T^x_d)$. Also, for any node $x$ and level $d$, the algorithm ensures that most of the nodes in the group of node $x$ at level $d+1$ and in the subtree $\sim s^x_{d+1}$ are likely to be in the connected component $\mathcal{G}_x(K^x_d)$. It is important to note that all the nodes $x$ of a group at level $d$ has the exact same value set to their T-timestamps $T^x_d$. Also, all the nodes $x$ of a group at level $d$ has the exact same value set to their K-timestamps $K^x_{d-1}$. Figure \ref{fig:ex_intragroup} shows an example of intra-group transformation after communication between nodes $A$ and $B$ to give readers a sense of how these timestamps play a role in the rearrangement of the groups.

\begin{figure}[htbp]
\hspace{-0.2in}
\def \subfigcapskip{0pt}
\centering
    \subfigure[A communication graph where the numbers on the edges indicate the time of the most recent communication between the node pair.]
    {\includegraphics[width=0.65\columnwidth]{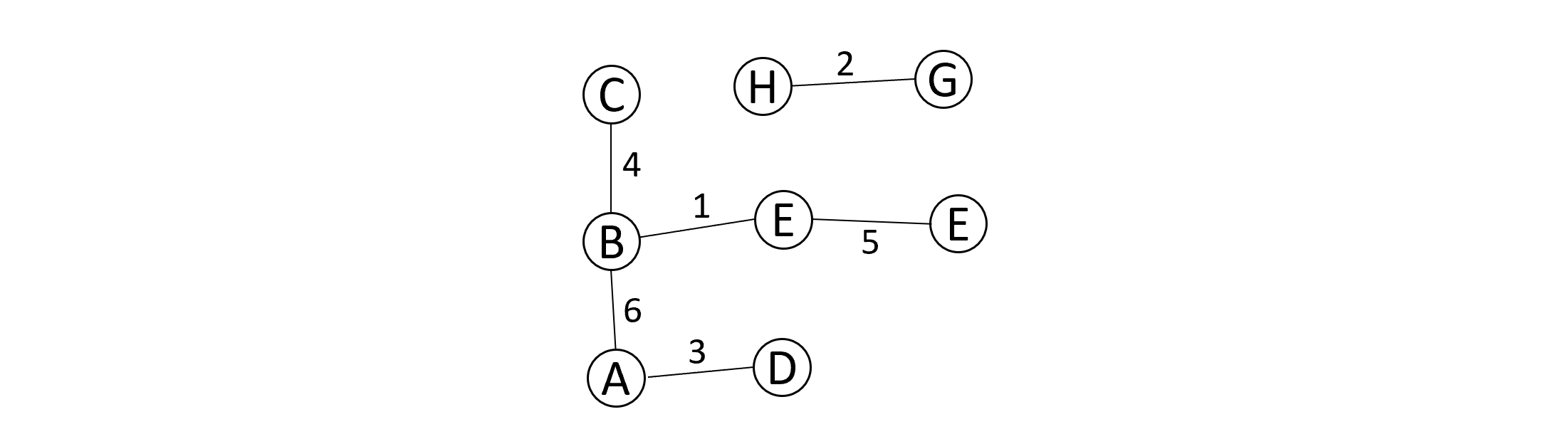}}
\hspace{0.2in}
\centering
    \subfigure[Tree modeling of the network at time 6 ($\mathcal{N}_6$), before the communication $(A,B)$.]
    {\includegraphics[width=0.65\columnwidth]{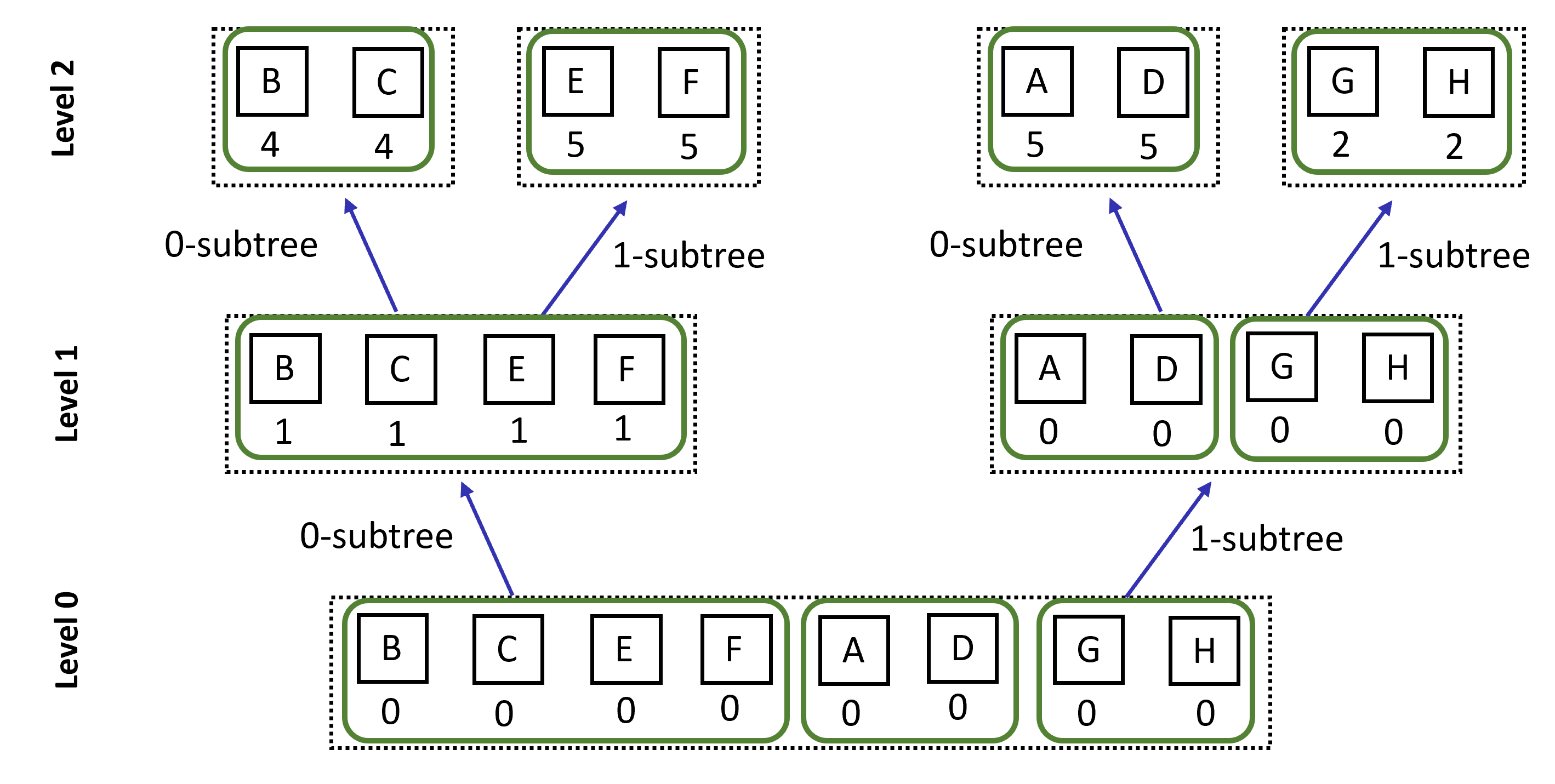}}
\hspace{0.2in}
\centering
    \subfigure[Tree modeling of the network after the transformation following the communication $(A,B)$.]
    {\includegraphics[width=0.65\columnwidth]{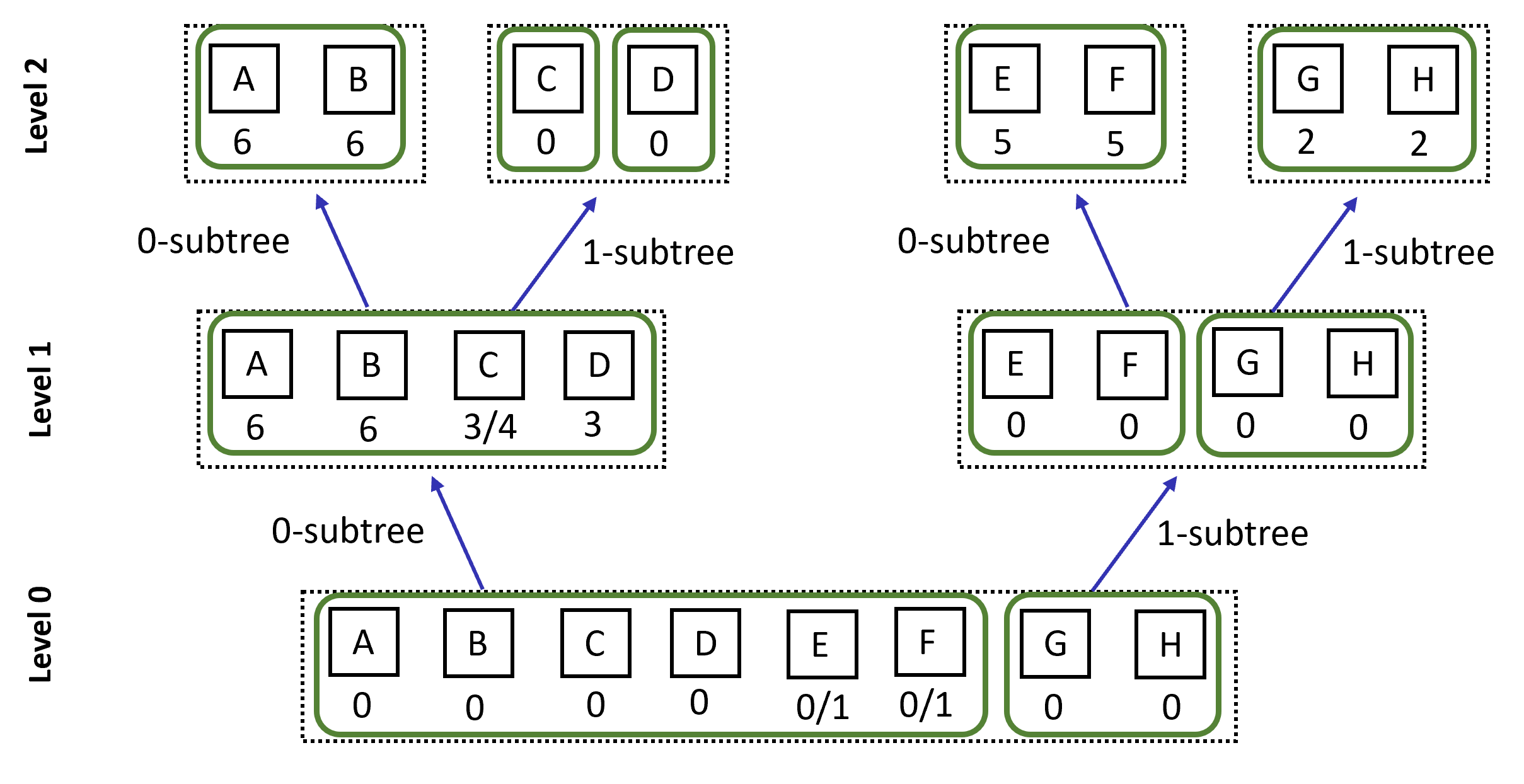}}
\caption{ An example of an intra-group transformation for communication $(A,B)$. The green boxes indicate groups at different levels and the single number below the nodes indicate the node's T-timestamp and K-timestamp for the corresponding level when both timestamps are the same. When the timestamps are different, they are shown in format $T^x_d/K^x_d$.}
\label{fig:ex_intragroup}
\end{figure}

\textbf{Transformation Procedure.}
Given that the communicating nodes $u$ and $v$ are in the same group at level $\alpha$ (but not in level $\alpha + 1$), we compare the size of node $u$'s and $v$'s level-$(\alpha+1)$ groups. Suppose, $group(u,\alpha+1)$ and $group(v, \alpha+1)$ denote the level-$(\alpha+1)$ groups of nodes $u$ and $v$, respectively. Suppose, $|group(u,\alpha+1)| \geq |group(v,\alpha+1)|$  (the algorithm is symmetrical otherwise). We refer to $group(u, \alpha+1)$ and $group(v, \alpha+1)$ as the \emph{dominant} and the \emph{submissive} group, respectively. We count the number of nodes $x \in group(v,\alpha+1)$ such that $K^x_{\alpha} \geq T^y_i$ for each $i >\alpha$ and each $y \in \{u,v\}$. Let $count_y(i)$ be the number of nodes in $group(v,\alpha+1)$ such that $K^x_{\alpha} \geq T^y_i$.

For each $i > \alpha$ such that  $T^u_{i+1} > 0$, we select a set of nodes $S_i$ for repositioning from each subtree $\sim s^u_{i+1}$ by using the following rules:

\begin{itemize}

\item If $count_u(i) \geq |\sim s^u_{i+1}|$, all the nodes of $\sim s^u_{i+1}$ are included in $S_i$.

\item If $count_u(i) < |\sim s^u_{i+1}|$, $S_i$ includes all the $j$-relative groups (if exists) in subtree $\sim s^u_{i+1}$, such that $j\leq i$. If there are any $k$-relative groups of these $j$-relative groups such that $k>i$, then all these $k$-relative groups are also included in $S_i$. Let $R$ be the total number of nodes in all these $j$ and $k$-related groups. If $R < count_u(i)$, then a random coordinate $C$ is chosen in subtree $\sim s^u_{i+1}$ such that $C$ is not the coordinate of any of the nodes in the $j$ and $k$-related groups included in $S_i$. We include the nodes within the coordinate ranged from $C$ to $C + (count_u(i) - R) - 1$ and all the member of their own and relative groups in subtree $\sim s^u_{i+1}$. We move all the nodes in $S_i$ together without violating invariant $I$ using transformation similar to the inter-group transformation. The details about the distributed implementation of moving these nodes together are presented in Section \ref{sec:dist_impl}.


\end{itemize}

For each $i > \alpha$ such that $T^u_{i+1} = 0$ but $T^u_{i} > 0$, we select the set of nodes $S_i$ by using rules similar to above, except that instead of choosing the nodes from subtree $\sim s^u_{i+1}$, we choose the nodes of $S_i$ from the set of nodes $\{x| (x \in s^u_i) \land (x \not \in group(u,i+1))\}$. Then we move together all the nodes of $S_i$ and $group(u,i+1)$ without violating the invariant $I$. Also, set $S_i$ is empty if both $T^u_{i+1}$ and $T^u_{i}$ are zero.

Let $N^\prime$ be the intermediate logical network topology after inter-group transformation and moving together nodes of $S_i$s for each $i > \alpha$. Let $S$ be the set of all the nodes selected for repositioning, which implies $S = \{S_N \cup S_{N-1} \cup \cdots \cup S_{\alpha+1} \cup group(v,\alpha+1)\}$. From this point forward, we reposition the nodes in $S$ to transform the intermediate network $N^\prime$ to $\mathcal{N}_{t+1}$ and the coordinate of any node $x \not \in S$ does not change anymore.

Let $rank(x)$ be the rank of the coordinate of any node $x \in S$ in network $N_{t+1}$ (after reposition) such that, for any two nodes $a$ and $b$ in $S$, $rank(a) < rank(b)$ if and only if (I) tree distance $d_{Tree}(N_{t+1},(u,a)) \leq d_{Tree}(N_{t+1},(u,b))$, and (II) $coord(a) < coord(b)$ in network $\mathcal{N}_{t+1}$ if $d_{Tree}(N^\prime,(u,a)) = d_{Tree}(N^\prime,(u,b))$.

Let $T = \{T_1, T_2, \cdots T_{2(N-\alpha)}\}$ be the sorted (in non-increasing order) list of T-timestamps $T^y_i$ for all $i>\alpha$ and $y \in \{u,v\}$. For nodes in $S_i$ for each $i>\alpha$, we count the number of nodes $x \in S_i$ such that $K^x_i \geq T^y_i$ for each $i >\alpha$ and each $y \in \{u,v\}$. Adding these counts with the previous counts ($count_y(i)$s), we determine $COUNT(i)$ for for each $i, 1 \leq i \leq 2(N-\alpha)$, where $COUNT_(i)$ denotes the number of nodes $x \in S$ with $K^x_i > T_i$, and $T_i \in T$.

Let for any node $x \in S$, $k(x)$ be the lowest integer such that $K^x_j \geq T_{k(x)}$, where $j$ is an integer such that the tree distance from node $x$ to the nearest communicating node ($u$ or $v$) in network $\mathcal{N}_t$ is $N-j$. The reposition takes places using the following rules:

\begin{itemize}
\item In network $\mathcal{N}_{t+1}$, $rank(x) \leq COUNT(k(x))$.
\item Let $group(x)$ denote the level-$(j+1)$ group of node $x$ in network $\mathcal{N}_t$, combined with all its $d$-relative groups that are included in set $S$. All the nodes of $group(x)$ have consecutive ranks in network $\mathcal{N}_t$.

\end{itemize}

Clearly, node $u$ is not included in set $S$. Since $K^v_i$ is set to $\infty$ for any $i>\alpha$, $v$ moves to a subtree $s^u_N$ after the repositioning described above.

\textbf{Timestamps update:} \RSG~ has the following timestamp rules:
\begin{itemize}
\item [\textbf{T1}] Each node $x$ has a counter $C^x_d$ and a next-T-timestamp variable $nextT^x_d$ for each level $d$. All the counters are initialized to zero at the beginning. Let $(u,v)$ be a pair of communicating nodes such that $v \in S$ in the intra-group transformation. Let $j_x$ be an integer such that the tree distance between node $x$ and the nearest communicating node in intermediate network $N^\prime$ is $N-j_x$. Let $X_i$ be the set of nodes such that $X_i \subseteq S$ and for each node $x \in X_i$, $T_{i-1} < K^x_{j_x} \leq T_i$, where $1 < i < \alpha$.

    Now, for each level $d > \alpha$, we perform the following computation. If there exists an integer $i$ and node pair $\{a,b\} \subseteq X_i$, such that $a \in s^u_d$ and $b \in s^u_{d-1}$ in network $\mathcal{N}_{t+1}$, we compute $k$, where $k$ is the number of nodes in $X_i$ that are in subtree $s^u_d$ in network $\mathcal{N}_{t+1}$. We compute the approximate $L$-th largest $K^x_{j_x}$ among all nodes $x \in X_i$, where $L = \bigg( \Big( \left\lceil \frac{k}{N} \right\rceil + 1 \Big)  \frac{2^{\left\lceil \log |X_i| \right\rceil}}{N} \bigg)$. In section \ref{sec:dist_impl}, we present a distributed algorithm the computation of $k$ and the approximate $L$-th largest $K^x_{j_x}$ in $O(\log X_i)$ rounds with $O(X_i)$ messages. We set the value of $nextT^x_d$ as the approximate $L$-th largest timestamp in $X_i$. Then we move together all the nodes $x$ in $X_i$ with $K^x_{j_x}$ larger or equal to the approximate $L$-th largest timestamp, and place $k$ of them in subtree $\sim s^u_d$ in network $\mathcal{N}_{t+1}$. This ensures that all the $k$ nodes that are placed in subtree $\sim s^u_d$ has $K^x_{j_x} \geq nextT^x_d$.

    All nodes $x \in s^u_d$ update their T-timestamp as $T^x_d \leftarrow nextT^x_d$ if $C^x_d + k \geq |s^u_{d+1}|$, and update the counter as $C^x_d = (C^x_d + k) \mod |s^u_{d+1}|$. In other words, the T-timestamp is updated to a new value when the algorithm believes that there are sufficient number of nodes in subtree $s^u_d$ that are in the connected component $\mathcal{G}_u (nextT^x_d)$.

\item [\textbf{T2}] Any node moves from subtree $\sim s^u_d$ to subtree $\sim s^u_{d^\prime}$, where $d$ and $d^\prime$ are levels such that $\alpha \leq d^\prime < d$, update their K-timestamps as $K^x_{d^\prime} \leftarrow K^x_d$ and $K^x_d \leftarrow 0$.

\item [\textbf{T3}] Both the communicating nodes set their T-timestamp and K-timestamp for level $N-1$ to $t$.

\end{itemize}


\subsection{Distributed Implementation} \label{sec:dist_impl} 

Upon a communication request $(u,v)$ and routing at time $t$, nodes $u$ and $v$ share (between each other) their group-ids, T-timestamps, start and end coordinates of their groups and related groups for levels $\alpha, \alpha+1, \cdots N-1$. The general idea is that nodes $u$ and $v$ send these information to the nodes that will take part in the transformation (i.e. selected for repositioning), and then the nodes locally compute their new coordinate in network $\mathcal{N}_{t+1}$, using the broadcasted and local information. Note that, this requires at most $N-\alpha$ rounds (logarithmic to working set number) and at most $2^{N-\alpha}$ messages (linear to working set number), and the nodes partially simulate the transformation to compute their new coordinate in parallel. The nodes then communicate with the node in their target coordinate (in parallel) and simply acquires the new links to complete the transformation. Communicating with target nodes requires at most $N-\alpha$ rounds (logarithmic to working set number) and amortized $2^{N-\alpha}$ messages (linear to working set number).

All the nodes in the network has a pseudo random number generator. Upon the communication $(u,v)$, nodes $u$ and $v$ agree on a random seed. Let $l(x)$ be the lowest level such that a pair of $l(x)$-relative groups exist in subtree $s^x_{l(x)}$. Each communicating node $x$, $x \in \{u,v\}$, broadcasts the group-ids, T-timestamps, start and end coordinates of groups and related groups of both communicating nodes for levels $\alpha, \alpha+1, \cdots N-1$, and the random seed to all nodes in subtrees $s^x_{\min(l(u),l(v),\alpha+1)}$.

If $\min(l(u),l(v)) > \alpha + 1$, subtree leap is required and subtree $s^v_{\min(l(u),l(v))}$ is swapped by subtree $\sim s^u_{\min(l(u),l(v))}$. Each node $x$ in subtree $s^v_{\min(l(u),l(v))}$ computes their target coordinate after subtree leap as the concatenation of first $\min(l(u),l(v)) -1$ bits of the coordinate of $u$ and last $N-\min(l(u),l(v))$ bit of the coordinate of node $x$.

Both inter-group and intra-group transformations require moving a set of groups together without violating the Invariant I. Given that all the nodes involved in any inter-group or intra-group transformation have the start and end coordinates of the related groups of nodes $u$, $v$ and their own, nodes simulate their new position in rounds. In the first round, each node $x$ exchanges the start and end coordinates of their level-$(d+1)$ groups with the node $y$ such that the coordinates of nodes $x$ and $y$ differ only by the last bit. With these information, nodes can simulate if any move is necessary in subtree $s^x_{N-1}$ and compute the new start and end coordinates of their groups. In the second round, nodes $x$ forward their simulated start and end coordinates and $K^x_{d+1}$ to the node $y$ such that the coordinates of nodes $x$ and $y$ differ only by the second last bit. Similarly, nodes simulate any necessary move in subtree $s^x_{N-2}$. This continues recursively and parallelly until all the movements are simulated. The number of rounds required for this simulation is logarithmic of the number of nodes involved in the simulation.

The necessary counts can be computed during the process of simulation of the movements. For computing the $L$-th smallest approximate K-timestamp, we also use a similar technique as described above. For computing the $L$-th smallest approximate K-timestamp in a set of $k$ nodes, we construct a temporary hypercube of $\left \lceil \log k \right \rceil$ dimensions, such that (a) nodes are divided into subtrees as evenly as possible, and (b) if a subtree of size 2 has only one node (in case $k$ is not a power of 2), place a dummy node to fill the gap. Clearly a subtree of size 2 can have at most one dummy node, and we assign each dummy node exactly the same K-timestamp value of the other node in their subtree of size 2. After the construction, the computation of the approximate timestamp is similar to the Approximation Median Finding (AMF) algorithm we proposed in \cite{DSG}. Note that, AMF uses a temporary skip list instead of a temporary hypercube. This computation requires $O(\log k)$ rounds and it conforms to the $\mathcal{CONGEST}$ model of communication.

We summarize \RSG~ in Algorithm \ref{alg:rsg}.

\begin{algorithm*}[htb]

\DontPrintSemicolon
\caption{\RSG~}
\label{alg:rsg}

Upon a communication request between nodes $u$ and $v$ at time $t$:\;

Establish communication using the standard routing algorithm of the network and perform subtree leap if necessary.

\If{$G^u_\alpha \neq G^v_\alpha$}{
    Perform inter-group transformation to merge the groups of nodes $u$'s and $v$'s at level $\alpha$, where $\alpha$ is the highest level such that subtree $s^u_\alpha$ contains both nodes $u$ and $v$ in the intermediate network after subtree leaps and inter-group transformation.
}

Nodes in the level-$(\alpha+1)$ group of node $v$, $group(v,\alpha + 1)$ (assuming that $|group(u,\alpha + 1)| \geq |group(v,\alpha + 1)|$), compute the counts $count_y(i)$.

Node $v$ sends the count $count_u(i)$ to a randomly chosen node $r_i$ in subtrees $\sim s^u_i$ for each $i \in \{ N, N-1, \cdots, \alpha+1 \}$.

Any node $r_i$ receiving the count $count_u(i)$ finds the nodes in set $S_i$ and move them together without violating the Invariant $I$. Let, $N^\prime$ be the intermediate network after nodes in $S_i$ are moved together in subtree $\sim s^u_i$.

During the process of moving together in the previous step, nodes of $S_i$ computes counts similar to counts $count_y(i)$. Node $r_i$ sends the start and end coordinates of the nodes in $S_i$ in network $N^\prime$ and the computed counts to node $v$.

Node $v$ computes the counts $COUNT(i)$ as described in Section \ref{sec:intra-group}. Node $v$ sends these counts to all nodes in set $S = \{S_N \cup S_{N-1} \cup \cdots \cup S_\alpha\}$. Each node $x \in S$ computes their new coordinate in transformed network $\mathcal{N}_{t+1}$ and move to their new position.

All nodes $x$ in subtree $s^u_{\alpha}$ update their timestamps, counters, and group-ids.


\end{algorithm*}

\presec
\section{Analysis}
\label{sec:analysis}
\postsec

\begin{lemma}
\label{ts-final-lemma}
\textbf{(Timestamp Lemma)} For any node $u$, any level $d$ such that $T^u_d >0$ at time $t$, the expected number of nodes $x \in (s^u_d \cap \mathcal{G}_u(T^u_d,t))$ is at least $0.63 \cdot |s^u_d|$.
\end{lemma}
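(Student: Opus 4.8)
The plan is to prove the slightly stronger bound $\mathbb{E}\big[|s^u_d\cap\mathcal{G}_u(T^u_d,t)|\big]\ge(1-1/e)\,|s^u_d|$ (note $1-1/e>0.63$) by downward induction on $d$, i.e.\ induction on $N-d$, splitting the subtree into its two depth‑one sub‑subtrees $s^u_{d+1}$ and $\sim s^u_{d+1}$, each of size $|s^u_{d+1}|=\tfrac12|s^u_d|$. Since the K‑timestamp guarantee quoted just before rule T1 (the members of $\sim s^x_{d+1}$ lying in $u$'s level‑$(d{+}1)$ group are in $\mathcal{G}_u(K^u_d,t)$) and the T‑timestamp guarantee of this lemma feed into one another, it is cleanest to carry both as a single invariant re‑established after every transformation, i.e.\ to run a joint induction over the execution. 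The base case $d=N$ is immediate: $s^u_N=\{u\}$ and $u\in\mathcal{G}_u(\cdot,t)$, so the expectation is $1\ge 1-1/e$. For $d<N$, $T^u_d>0$ forces $T^u_d$ to have been assigned by rule T1 (a wrap of the counter $C^u_d$) or by rule T3, and the inductive step below covers both (for T3 it degenerates, since then $d=N-1$ and $|s^u_d|=2$).

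For the inductive step fix $u$ and $d<N$ with $\tau:=T^u_d>0$, and let $t_0\le t$ be the last moment at which $T^u_d$ was set to $\tau$; bound the two halves of $s^u_d$ separately. For $s^u_{d+1}$: the rules T1--T3 maintain the invariant $T^u_{d+1}\ge T^u_d$ whenever $T^u_d>0$ --- intuitively, a fresh communication drops the partner into the size‑$2$ subtree and re‑timestamps it at the current time, whereas a shallower counter $C^u_d$ needs many transformations to wrap, so deeper subtrees carry the more recent timestamps. Hence $T^u_{d+1}\ge\tau$, the window $[T^u_{d+1},t)$ lies inside $[\tau,t)$, and $\mathcal{G}_u(T^u_{d+1},t)\subseteq\mathcal{G}_u(\tau,t)$ as vertex sets; the induction hypothesis at level $d+1$ then gives $\mathbb{E}\big[|s^u_{d+1}\cap\mathcal{G}_u(\tau,t)|\big]\ge\mathbb{E}\big[|s^u_{d+1}\cap\mathcal{G}_u(T^u_{d+1},t)|\big]\ge(1-1/e)\,|s^u_{d+1}|$.

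For $\sim s^u_{d+1}$: by the counter rule, between the previous wrap of $C^u_d$ and time $t_0$ the transformations touching level $d$ cumulatively placed at least $|s^u_{d+1}|$ ``fresh'' nodes into $\sim s^u_{d+1}$, where a node $x$ inserted at transformation time $s$ is fresh iff $K^x_{j_x}\ge nextT^x_d(s)$; by the K‑timestamp part of the joint invariant this certifies $x\in\mathcal{G}_u\big(nextT^x_d(s),t\big)$, and since the values $nextT^x_d(\cdot)$ only increase over time with last value $\tau$, it follows that $x\in\mathcal{G}_u(\tau,t)$. Within a single transformation these fresh nodes are distinct, and the random choices made while moving them (the coordinate $C$ and the portion $R$ in the inter/intra‑group transformations, the node $r_i$, and the randomized approximate‑median selection in rule T1) make each insertion land in a conditionally uniform slot of $\sim s^u_{d+1}$. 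Lower‑bounding the coverage by the fully independent model --- at least $|s^u_{d+1}|$ independent uniform balls into $|s^u_{d+1}|$ bins, which can only underestimate it since within‑transformation distinctness helps --- the expected number of occupied slots is at least $|s^u_{d+1}|\big(1-(1-1/|s^u_{d+1}|)^{|s^u_{d+1}|}\big)\ge(1-1/e)\,|s^u_{d+1}|$, and each occupied slot holds a node of $s^u_d\cap\mathcal{G}_u(\tau,t)$. Summing the two halves,
\[
\mathbb{E}\big[|s^u_d\cap\mathcal{G}_u(\tau,t)|\big]\;\ge\;2(1-1/e)\,|s^u_{d+1}|\;=\;(1-1/e)\,|s^u_d|\;\ge\;0.63\,|s^u_d|,
\]
which closes the induction.

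The main obstacle is the $\sim s^u_{d+1}$ half: making rigorous that the $\ge|s^u_{d+1}|$ insertions counted by $C^u_d$ are, at the evaluation time $t$, (i) still resident in $\sim s^u_{d+1}$ and (ii) still fresh with respect to the \emph{final} value $\tau$ of $T^u_d$, not merely to the earlier $nextT^x_d$ in force when each was inserted. This is exactly where the monotonicity of the sequence $nextT^x_d(\cdot)$, the K‑timestamp semantics, and the fact that later transformations move nodes together with their (relative) groups have to be combined --- the reason it is cleanest to package everything as one invariant maintained throughout the execution rather than a bare induction on $d$. A secondary technical point is justifying that the algorithm's coordinate choices, driven by a single shared seed per communication, really behave like conditionally uniform, sufficiently independent placements for the balls‑into‑bins estimate; the safe route is to bound coverage below by the fully independent model and let any positive correlation coming from within‑transformation distinctness work in our favour.
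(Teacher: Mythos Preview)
Your proposal is essentially the same approach as the paper's: both split $s^u_d$ into $s^u_{d+1}$ and $\sim s^u_{d+1}$, use the counter‑wrap rule T1 together with a balls‑into‑bins estimate to get the $1-1/e>0.63$ fraction on $\sim s^u_{d+1}$, and then invoke the monotonicity $T^u_{d+1}\ge T^u_d$ to recurse into $s^u_{d+1}$. Your write‑up is in fact more explicit than the paper's (which compresses the inductive structure into a single sentence), and the two obstacles you flag at the end---persistence/freshness of the inserted nodes relative to the \emph{final} $\tau$, and the validity of the independent balls‑into‑bins lower bound for the algorithm's correlated randomness---are exactly the points the paper's proof leaves implicit.
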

\begin{proof}
From the timestamp rule T1, we know that $T^u_d$ is updated  every time the counter $C^u_d$ is incremented by $|\sim s^u_{d+1}|$. This implies that, in between two consecutive updates of $T^u_d$, at least $|\sim s^u_{d+1}|$ nodes in subtree $\sim s^u_{d+1}$ are replaced. Let $X_1, X_2, \cdots, X_k$ be the communications between two consecutive updates of $T^u_d$, where only one of the communicating nodes in $X_i$ is from subtree $s^u_d$. Let $E[Y]$ be the expected number of nodes selected for repositioning in subtree $\sim s^u_d$ by the transformations followed by communications $X_1, X_2, \cdots, X_k$. Using the classic balls and bins model, we can show that:

\begin{equation}
\label{eq:ey}
E[Y] = 1 - \sum_{i=1}^{|\sim s^u_{d+1}|} \prod_{j=1}^{k}(1-p_{i,j})
\end{equation}

where $p_{i,j}$ is the probability that the $i^{th}$ coordinate in subtree $\sim s^u_{d+1}$ is selected for repositioning by the transformation followed by communication $X_j$.

Now, if $k = |\sim s^u_{d+1}|$, $E[Y] = 1 - |\sim s^u_{d+1}|\big(1-\frac{1}{|\sim s^u_{d+1}|}\big)^{|\sim s^u_{d+1}|} \approx 1 - \frac{|\sim s^u_{d+1}|}{e} > 0.63 \cdot |\sim s^u_{d+1}|$. On the other hand, if $k = |\sim s^u_{d+1}|$, more than one nodes must be selected for repositioning by the transformation followed by a communication $X_j$. Which implies, there exists at least a $(i,j)$ pair, for which $p_{i,j} > \frac{1}{|\sim s^u_{d+1}|}$. Thus, from Equation \ref{eq:ey}, we get $E[Y] >  - \frac{|\sim s^u_{d+1}|}{e} > 0.63 \cdot |\sim s^u_{d+1}|$. Clearly, after transformations $X_1, X_2, \cdots, X_k$, subtree $\sim s^u_{d+1}$ will have at least $0.63 \cdot |\sim s^u_{d+1}|$ nodes $x$ from $\mathcal{G}_u(T^u_d,t)$, as for all these nodes $x$, $K^x_d \geq T^u_d$.

We argue that the above argument is also true for all subtrees $\sim s^u_e$ where $e>d$. Now, according to T1, $T^u_e \geq T^u_d$ for all $e>d$. Given that the lemma holds trivially at time 0 for all levels, and also for $d = N - 1$ at any time, the lemma holds for all levels $d$ for all times $t$.
\end{proof}

\begin{lemma}
\label{k-order-lemma}
\textbf{(K-Order Lemma)} Let a communication $(u,v)$ takes place at time $t$ and a node $x$ is repositioned and placed in subtree $\sim s^u_d$ in network $\mathcal{N}_{t+1}$. Let $S_{d+1}$ be the set of nodes repositioned by the intra-group transformation and placed in $\sim s^u_{d+1}$. At least $0.8 \cdot |S_{d+1}|$ nodes from set $S_{d+1}$ have their level-$(d+1)$ K-timestamp greater than $K^x_d$ at time $t+1$.
\end{lemma}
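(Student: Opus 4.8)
The plan is to trace every repositioned node through the intra-group reposition rules and the timestamp updates T1--T2, and to compare the level-appropriate K-timestamp it carries at time $t+1$ with $K^x_d$. Write $T=(T_1,\dots,T_{2(N-\alpha)})$ for the sorted timestamp list and, for a node $y\in S$, let $X_{k(y)}=\{z\in S : T_{k(y)}\le K^z_{j_z}<T_{k(y)-1}\}$ be the ``bucket'' into which $y$'s relevant K-timestamp falls; by the definition of $k(\cdot)$ these buckets partition $S$ and are ordered by recency. The first step is to show that the reposition constraint $rank(y)\le COUNT(k(y))$, together with the fact that the ranks form a bijection onto an initial segment of the integers and that each $group(y)$ receives consecutive ranks, forces bucket $X_i$ to occupy (essentially) the contiguous block of ranks $\big(COUNT(i-1),COUNT(i)\big]$. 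Since a node in $\sim s^u_{d+1}$ is one tree-hop closer to $u$ than a node in $\sim s^u_d$, the reposition gives $S_{d+1}$ strictly smaller ranks than $x$, so every $y\in S_{d+1}$ lies in a bucket with $k(y)\le k(x)$.

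Second, I would dispose of the nodes $y\in S_{d+1}$ lying in a \emph{strictly} higher bucket, i.e.\ $k(y)<k(x)$. For these, $K^y_{j_y}\ge T_{k(y)}\ge T_{k(x)-1}>K^x_{j_x}$ directly from the bucket definitions. In the typical case, where $x$ and $y$ are pushed outward from deeper complement subtrees, rule T2 copies their carried K-timestamps down, unchanged in value, to the levels of their new positions, so $K^x_d=K^x_{j_x}$ and $K^y_{d+1}=K^y_{j_y}$ at time $t+1$; hence $K^y_{d+1}>K^x_d$ for every such $y$.

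Third, the remaining nodes of $S_{d+1}$ lie in the bucket $X_{k(x)}$ itself, so that bucket straddles the level-$d$ split in $\mathcal{N}_{t+1}$. Here I would invoke the $L$-th-largest step of T1: it moves to the deeper side ($s^u_d$) the members of $X_{k(x)}$ whose K-timestamp is at least the approximate $L$-th largest value of the bucket, sets $nextT^x_d$ -- and, through that same move, the level-$d$ K-timestamp carried by $x$ -- to that approximate threshold, and keeps on the deeper side the $k$ of them with the largest timestamps, $S_{d+1}\cap X_{k(x)}$ being among them. Were the median-type routine exact, $x$ (on the shallow side) would have $K^x_{j_x}<$ threshold, so every node $z\in S_{d+1}\cap X_{k(x)}$ would satisfy $K^z_{d+1}\ge$ threshold $>K^x_d$ at $t+1$; in general the approximation error of the AMF routine of \cite{DSG}, balanced against the choice $L=\big(\lceil k/N\rceil+1\big)2^{\lceil\log|X_{k(x)}|\rceil}/N$, leaves at most a $0.2$ fraction of $S_{d+1}\cap X_{k(x)}$ at or below $K^x_d$. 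Combining the second and third steps gives at least $0.8\,|S_{d+1}|$ nodes of $S_{d+1}$ with level-$(d+1)$ K-timestamp exceeding $K^x_d$ at time $t+1$.

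The main obstacle is this last step: pinning the constant $0.8$ down to the precise interaction between the AMF approximation guarantee and the value of $L$, and verifying that $x$'s level-$d$ K-timestamp at $t+1$ is genuinely capped by that same threshold -- i.e.\ that no subsequent move inside the same transformation re-raises $K^x_d$. I would also need to handle the corner cases where $X_{k(x)}$ lies entirely on one side of the level-$d$ split (so the straddle step is vacuous and the claim follows from the second step alone) and where a repositioned node is pulled \emph{inward} rather than pushed outward, for which T2 does not apply directly and the carried K-timestamp must instead be traced through the $nextT$ mechanism of T1.
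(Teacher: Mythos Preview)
Your bucket decomposition and the two-case split (strictly higher bucket versus the straddling bucket) is headed in the same direction as the paper, but the paper's argument is considerably shorter and dispenses with almost all of the machinery you set up. The paper does not trace rule T2 node by node or partition $S$ into buckets; it simply observes that the counting mechanism guarantees that at most $|S_{d+1}|$ repositioned nodes can fall into the same T-timestamp range, and then invokes the approximate $L$-th largest K-timestamp routine directly to conclude that at least a $(N-1)/N$ fraction of the nodes in $S_{d+1}$ have level-$(d+1)$ K-timestamp exceeding $K^x_d$. The constant you were worried about pinning down is then handled by a one-line case split on the dimension $N$: if $N>4$ then $(N-1)/N\ge 4/5=0.8$ and the lemma follows; if $N\le 4$ then $|\sim s^u_{d+1}|\le 4$ and the claim holds for \emph{all} nodes of $S_{d+1}$, not just an $0.8$ fraction.

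So your ``main obstacle'' is exactly what the paper resolves with this $(N-1)/N$ bound plus the $N>4$ versus $N\le 4$ split, and you should adopt that in place of the open-ended AMF error analysis you sketch. Your more careful bucket-and-T2 tracing would make the paper's terse first sentence rigorous, but for the purposes of matching the paper's proof it is unnecessary overhead; the corner cases you flag (bucket entirely on one side, inward moves) are likewise not addressed separately in the paper.
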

\begin{proof}
Since \RSG~ counts nodes with K-timestamps higher than all the T-timestamps of the involved levels of both the communicating nodes, it is easy to see that among the nodes that are placed in subtree $s^u_d$ by intra-group transformation, no more than $|S_{d+1}|$ nodes can have K-timestamps from the same range of T-timestamps as counted by \RSG. The approximate $L$-th largest K-timestamp computation ensures that at least $\frac{N-1}{N}$ nodes in $S_{d+1}$ have their level-$(d+1)$ K-timestamp greater than $K^x_d$ at time $t+1$, where $N$ is the number of dimensions in the hypercube network. If $N>4$, the lemma holds directly. Otherwise, the subtree $\sim s^u_{d+1}$ can have at most 4 nodes, and all the nodes in $S_{d+1}$ have their level-$(d+1)$ K-timestamp greater than $K^x_d$.

\end{proof}

\begin{lemma}
\label{lem:exp-dist}
\textbf{(Intra-Group Distance Lemma)} Let a node pair $(u,v)$ communicated at time $t^\prime$ for the last time prior to time $t$. Let $\alpha_t$ be the highest level such that $s^u_\alpha$ is a common subtree for both nodes $u$ and $v$ in network $\mathcal{N}_t$. If nodes $u$ and $v$ belong to the same group at level $\alpha$, the expected tree distance between nodes $u$ and $v$ in network $\mathcal{N}_t$ is at most $\log T_t(u,v) + 1$.
\end{lemma}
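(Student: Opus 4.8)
Write $W := T_t(u,v)$; since $u$ and $v$ communicated at $t^\prime$, we have $W = |\mathcal{G}_u(t^\prime,t)|$. Let $\alpha$ be the (random) level such that $s^u_\alpha$ is the smallest common subtree of $u$ and $v$ in $\mathcal{N}_t$, so that $d_{Tree}(\mathcal{N}_t,(u,v)) = N-\alpha$ and $\{v \notin s^u_d\} = \{\alpha < d\}$. The plan is to start from the layer-cake identity
\begin{equation*}
E\big[\,N-\alpha\,\big] \;=\; 1 + \sum_{d=1}^{N-1} \Pr\big[\,v \notin s^u_d\,\big],
\end{equation*}
which reduces the lemma to the claim that $\sum_{d=1}^{N-1}\Pr[v\notin s^u_d] \le \log W$.

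The heart of the argument is a ``budget'' inequality drawn from the Timestamp Lemma (Lemma~\ref{ts-final-lemma}). Fix an outcome in which $u$ and $v$ lie in a common group at level $\alpha$; then $u \in s^u_{\alpha+1}$ and $v \in \sim s^u_{\alpha+1}$, and the complementary subtrees $\sim s^u_{\alpha+1},\sim s^u_{\alpha+2},\dots,\sim s^u_{N-1}$ are pairwise disjoint subsets of $s^u_\alpha$ of sizes $2^{N-\alpha-1},2^{N-\alpha-2},\dots,2$. I will argue that for every $e$ with $\alpha < e \le N-1$ one has $T^u_{e-1}\ge t^\prime$: rule T3 sets $T^u_{N-1}=t^\prime$ at time $t^\prime$, the monotonicity $T^u_{e}\ge T^u_{e-1}$ from rule T1 pushes recency to shallower levels, and a subsequent communication of $u$ with a third party only involves nodes of $\mathcal{G}_u(t^\prime,t)$ and so never grows $\mathcal{G}_u(T^u_{e-1},t)$ beyond $\mathcal{G}_u(t^\prime,t)$ — the case in which $v$ itself was pushed one level shallower being handled with the K-Order Lemma (Lemma~\ref{k-order-lemma}), since $v$'s level-$\alpha$ K-timestamp was refreshed to $t^\prime$ (or to $\infty$) and is dominated by the K-timestamps of the repositioned nodes filling $\sim s^u_e$. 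Granting this, Lemma~\ref{ts-final-lemma} at level $e-1$ puts, in expectation, at least $0.63\,|\sim s^u_e|$ nodes of $\mathcal{G}_u(T^u_{e-1},t)\subseteq\mathcal{G}_u(t^\prime,t)$ inside $\sim s^u_e$; since the $\sim s^u_e$ are disjoint, their $\mathcal{G}_u(t^\prime,t)$-populations sum to at most $W$, so
\begin{equation*}
0.63\big(2^{N-\alpha}-2\big) \;\le\; \sum_{e=\alpha+1}^{N-1} 0.63\,\big|\sim s^u_e\big| \;\le\; W
\end{equation*}
(in expectation, with appropriate layering to handle that the number of terms depends on $\alpha$). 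This caps $2^{N-\alpha}$ at roughly $W/0.63$, giving $N-\alpha \le \log W + O(1)$; carrying the estimate out per level $d$ — where $\Pr[v\notin s^u_d]=0$ as soon as $2^{N-d}$ overruns the $\mathcal{G}_u(t^\prime,t)$-budget, and the surviving terms form a short geometric-tail sum — collapses the $O(1)$ slack to yield $\sum_d\Pr[v\notin s^u_d]\le\log W$, hence $E[N-\alpha]\le\log W+1$.

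The main obstacle is the bookkeeping behind ``$T^u_{e-1}\ge t^\prime$ for all relevant $e$'': one must be certain that every complementary subtree strictly between the common level and level $N-1$ really has been refreshed since $t^\prime$, so that its $0.63$-fraction of ``old'' nodes lies inside $\mathcal{G}_u(t^\prime,t)$ rather than inside a larger, staler component, and one must avoid double-counting the $\mathcal{G}_u(t^\prime,t)$-nodes across the different $\sim s^u_e$ — which is exactly why their pairwise disjointness against the fixed budget $|\mathcal{G}_u(t^\prime,t)|=W$ is the binding constraint. The secondary difficulty is steering the three sources of slack — the randomized choice of which subgroups are evicted, the $0.63$ balls-and-bins bound of Lemma~\ref{ts-final-lemma}, and the approximate $L$-th-largest K-timestamp underlying the $0.8$ factor of Lemma~\ref{k-order-lemma} — so that the final additive constant is $1$ rather than merely $O(1)$.
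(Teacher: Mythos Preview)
Your approach is genuinely different from the paper's. The paper argues \emph{inductively over time}: it enumerates the communications $t_1<t_2<\cdots<t_m$ that touch the group containing $u$ and $v$ after $t'$, and at each $t_i$ it re-establishes the invariant $d_{Tree}(\mathcal{N}_{t_i+1},(u,v))\le\log T_{t_i+1}(u,v)+1$ by case analysis on whether $u,v$ sit in the dominant or the submissive group of the intra-group transformation (using the Timestamp Lemma, the K-Order Lemma, and a Poisson-binomial count of how often $v$ is randomly selected for repositioning). Your proposal is instead a \emph{static} budget argument at time $t$: decompose $E[N-\alpha]$ via the layer-cake identity and bound the tail using the disjointness of the complementary subtrees $\sim s^u_e$ against the fixed budget $|\mathcal{G}_u(t',t)|=W$. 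That is an attractive and more conceptual route, and if it went through it would give a cleaner explanation of why the ``$+1$'' is the right slack.

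However, the step on which your whole budget hinges is not established. You assert $T^u_{e-1}\ge t'$ for every $e$ with $\alpha<e\le N-1$, and justify it by ``rule T3 sets $T^u_{N-1}=t'$'' together with ``the monotonicity $T^u_e\ge T^u_{e-1}$ pushes recency to shallower levels.'' But that monotonicity goes the wrong way for your claim: from $T^u_{N-1}=t'$ and $T^u_e\ge T^u_{e-1}$ you only get $T^u_{e-1}\le T^u_{N-1}$, i.e.\ an \emph{upper} bound on $T^u_{e-1}$, not the lower bound $T^u_{e-1}\ge t'$ you need in order to embed $\mathcal{G}_u(T^u_{e-1},t)$ inside $\mathcal{G}_u(t',t)$. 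A level-$(e-1)$ timestamp is refreshed only after the counter $C^u_{e-1}$ accumulates $|\sim s^u_e|$ worth of repositionings (rule T1), and nothing in your argument forces that to have happened since $t'$ for all $e>\alpha$; without it, the $0.63|\sim s^u_e|$ ``recent'' nodes delivered by the Timestamp Lemma may lie in a strictly larger, staler component than $\mathcal{G}_u(t',t)$, and the budget inequality $0.63(2^{N-\alpha}-2)\le W$ collapses. A second, related gap is that you treat $0.63(2^{N-\alpha}-2)\le W$ as if it were a pointwise inequality, but the $0.63$-fraction is only an expectation and the number of summands itself depends on the random level $\alpha$; you would need a conditional-expectation or layering argument to turn this into a bound on $\Pr[v\notin s^u_d]$ level by level, and you have not supplied one. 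The paper's time-inductive proof sidesteps both issues precisely by never needing a global lower bound on $T^u_{e-1}$: it only compares $K$-timestamps of the \emph{currently repositioned} nodes against the $T$-timestamps of the communicating nodes at that single step.
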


\begin{proof}
Let $g(t)$ denote node $u$'s (and also node $v$'s) group at level $\alpha_t$ in network $\mathcal{N}_t$. Let $t_i$ denote the time at which the $i$-th communication $(x,y)$ took place after time $t^\prime$, such that $x \in g(t_i)$ and $y \not \in g(t_i)$. Let there has been $m$ such communications between time $t^\prime$ and time $t$. Obviously, $t^\prime < t_1 < t_2 < \cdots < t_m < t$.

If nodes $u$ and $v$ are in a submissive group during the intra-group transformation at any time $t_i$, we show that the expected distance between nodes $u$ and $v$ in network $\mathcal{N}_{t_i + 1}$ is at most $\log T_{t_i + 1}(u,v)$, assuming that their tree distance in network $\mathcal{N}_{t_i}$ is at most $\log T_{t_i}(u,v) + 1$. To this end, we analyze two possible cases. In the first case, we assume that the communicating node from the submissive group is not from $g(t_i)$. In this case, all the nodes of $g(t_i)$ will move together in network $\mathcal{N}_{t_i + 1}$, and the tree distance between nodes $u$ and $v$ will not exceed their tree distance in network $\mathcal{N}_{t_i}$. In the second case, we assume that one of the communicating nodes is from $g(t_i)$. Let $x$ be the communicating node from $g(t_i)$. According to Lemma \ref{ts-final-lemma}, in network $\mathcal{N}_{t_i + 1}$, there are at least $0.63 \cdot |s^x_{\alpha_{t_i+1} + 1}|$ nodes from $(s^x_{\alpha_{t_i+1} + 1} \cup \mathcal{G}_x(T^x_{\alpha_{t_i+1} + 1}, t_i+1))$. Thus the lemma follows.

On the other hand, let us consider that the nodes $u$ and $v$ are in a dominant group during the intra-group transformation at any time $t_i$. We show that the expected distance between nodes $u$ and $v$ in network $\mathcal{N}_{t_i + 1}$ is at most $\log T_{t_i + 1}(u,v) + 1$. Let us make a contradictory assumption that $d_{Tree}(\mathcal{N}_{t_i + 1},(u,v)) > \log T_{t_i + 1}(u,v) + 1$. Obviously there exists a node $x \in g(t_i + 1)$ ($x$ may be $u$ or $v$ or a different node) such that timestamp $T^x_i > 0$ for all $i \geq \alpha_{t_i+1}$. Let $x$ be a node in subtree $s^u_{\alpha_{t_i + 1}}$ (the analysis is similar if $x$ is in $s^v_{\alpha_{t_i + 1}}$ instead).


We have three possible cases. In case 1, we assume that $\alpha_{t_i} = \alpha_{t_i-1}$. In this case, the tree distance between nodes $u$ and $v$ does not increase. In case 2, we assume that $\alpha_{t_i} - \alpha_{t_i-1} = 1$. Note that, this can only happen if all the nodes placed in subtree $s^v_{\alpha_{t_i} + 1}$ by intra-group transformation at time $t_i$ has K-timestamp higher than $K^v_{\alpha_{t_i}}$. Let node $v$ be placed in subtree $\sim s^u_{\alpha_{t_i} + 1}$ for the first time after time $t^\prime$ at time $t_j$, where $t_j < t_i$. Let $p_i$ be the probability that node $v$ is selected for repositioning at time $t_i$ and all the nodes placed in subtree $s^v_{\alpha_{t_i} + 1}$ by intra-group transformation at time $t_i$ has K-timestamp higher than $K^v_{\alpha_{t_i}}$. Let $S(i)$ be the number of nodes selected for replacement from subtree $s^v_{\alpha_{t_i}}$ at time $t_i$. According to \RSG:

\begin{equation}
p_i \leq \frac{S(i)}{|s^v_{\alpha_{t_i} + 1}|}
\end{equation}

Let $E[Y]$ be the expected number of times node $v$ is selected for repositioning at time $t_a$, where $t_j < t_a$, given that all the nodes placed in subtree $s^v_{\alpha_{t_i} + 1}$ by intra-group transformation at time $t_a$ has K-timestamp higher than $K^v_{\alpha_{t_i}}$. Since the random selections of coordinate are independent of each other, using \emph{Poisson Binomial Distribution} \cite{Wang93} we get:

\begin{equation}
E[Y] = \sum_{a>j} p_a
\label{eq:pbd}
\end{equation}

Using the similar argument we used to prove Lemma \ref{ts-final-lemma}, we can show that the expected number of nodes selected in subtree $s^v_{\alpha_{t_i}}$ during the time interval $(t_j, t_i]$ is at least $0.63 \cdot |s^v_{\alpha_{t_i}}|$. Using Lemma \ref{k-order-lemma}, at least $0.8 \times 0.63 \cdot |s^v_{\alpha_{t_i}}|$ nodes in subtree $|s^v_{\alpha_{t_i}}|$ have their level-$\alpha_{t_i}$ K-timestamp higher than $K^v_{{t_i}+1}$. Which implies the tree distance between nodes $u$ and $v$ at time $t_i + 1$ is $\log T_{t_i + 1}(u,v) + 1$, and this is a contradiction.

In case 2, we assume that $\alpha_{t_i} - \alpha_{t_i-1} > 1$. We can show a contradiction similar to the one we show for case 2.

\end{proof}

\noindent
\textbf{Definition (\textit{Working Set Distance}).} For a hypercube $\mathcal{N}_i$ at time $i$, the
\emph{working set distance} for any node pair $(x,y)$ is $\left \lceil \log_2 T_i(x,y) \right \rceil$.

\noindent
\textbf{Definition (\textit{Relative Distance}).} For a hypercube $\mathcal{N}_i$ at time $i$, the
\emph{relative distance} of a $k$-relative group with $n$ nodes is $k - \left \lceil \log_2 n \right \rceil$.

\begin{theorem}

\textbf{(Routing Theorem)} For any communication sequence $\sigma = ( \sigma_1, \sigma_2, \cdots \sigma_m)$, the expected routing cost for \RSG~ is at most $2 (WS(\sigma) + m)$.
\label{theorem:hyp_routing}
\end{theorem}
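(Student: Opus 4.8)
The plan is to bound the expected routing cost by an amortized argument against a potential function built from the relative distances of the split groups, and to match the per‑request amortized cost to the working‑set distance $\lceil \log_2 T_t(\sigma_t)\rceil$. First I would observe that the standard hypercube routing used by \RSG~ sends a message between $u$ and $v$ through at most $d_{Tree}(\mathcal{N}_t,(u,v))$ intermediate nodes, so $d(\mathcal{N}_t,\sigma_t)\le d_{Tree}(\mathcal{N}_t,\sigma_t)$ and it suffices to show $\sum_t E[d_{Tree}(\mathcal{N}_t,\sigma_t)]\le 2(WS(\sigma)+m)$. I would then introduce the potential $\Phi_t$ equal to the sum of the relative distances of all relative‑group pairs present in $\mathcal{N}_t$. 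By Invariant~I each subtree contains at most one such pair per level, so $\Phi_t$ is well defined; $\Phi_0=0$ since the initial network has no split groups; and $\Phi_t\ge 0$ because a $k$-relative group with $n$ nodes always satisfies $k\ge\lceil\log_2 n\rceil$ (the relative pair occupies a subtree at least as large as the group it came from). With $\Phi$ in hand it is enough to prove that the amortized cost $a_t:=E[\,d_{Tree}(\mathcal{N}_t,\sigma_t)+\Phi_{t+1}-\Phi_t \mid \text{history before }t\,]$ of each request $\sigma_t=(u,v)$ is at most $2(\lceil\log_2 T_t(u,v)\rceil+1)$; summing over $t$ and using $\Phi_m\ge 0=\Phi_0$ then yields the theorem.

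The heart of the argument is a case analysis on $\sigma_t=(u,v)$. \emph{(Cross‑component case.)} If $u$ and $v$ lie in different connected components of the communication graph, then by the definition of the working set number $T_t(u,v)\ge 2^{d_{Tree}(\mathcal{N}_t,(u,v))}$, so $d_{Tree}(\mathcal{N}_t,\sigma_t)\le\lceil\log_2 T_t(u,v)\rceil$ for free, and it only remains to bound $E[\Phi_{t+1}-\Phi_t]$, which reduces to showing that the random evictions performed while merging the two components add only $O(\lceil\log_2 T_t(u,v)\rceil)$ expected relative distance. \emph{(Good case.)} If $u$ and $v$ belong to the same group at the LCA level $\alpha_t$ of $\mathcal{N}_t$, I would invoke the Intra‑Group Distance Lemma (Lemma~\ref{lem:exp-dist}) to get $E[d_{Tree}(\mathcal{N}_t,\sigma_t)]\le\log T_t(u,v)+1\le\lceil\log_2 T_t(u,v)\rceil+1$, and again bound the expected growth of $\Phi$ due to the random splits that the intra‑group transformation may create. \emph{(Bad case.)} If $u$ and $v$ are in the same component but in different groups at level $\alpha_t$ — which by Invariant~I means they sit in a relative pair — then $d_{Tree}(\mathcal{N}_t,\sigma_t)$ can be as large as $\lceil\log_2 T_t(u,v)\rceil$ plus the relative distance of $u$'s relative group; however the inter‑group transformation merges those relative groups, so $\Phi$ decreases by at least that relative distance, up to the small correction from freshly created relative pairs, and the excess routing cost is exactly paid for by the potential drop. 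In each case $a_t\le 2(\lceil\log_2 T_t(u,v)\rceil+1)$, hence $\sum_t E[d_{Tree}(\mathcal{N}_t,\sigma_t)]\le 2\sum_t(\lceil\log_2 T_t(\sigma_t)\rceil+1)=2(WS(\sigma)+m)$.

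The main obstacle I anticipate is the uniform estimate ``$E[\Phi_{t+1}-\Phi_t]\le\lceil\log_2 T_t(u,v)\rceil+O(1)$ after any good communication or component merge.'' I would establish it with a balls‑and‑bins argument in the spirit of Lemma~\ref{ts-final-lemma}: when \RSG~ evicts a randomly chosen coordinate or subgroup to make room for a merging group, the new relative pair it creates is split at a level chosen essentially uniformly among the free positions of the host subtree, so its expected contributed relative distance telescopes into $O(\log(\text{host subtree size}))=O(\lceil\log_2 T_t(u,v)\rceil)$; summing this over the $O(\log T_t)$ levels touched by a single transformation, and using Invariant~I to cap the number of relative pairs per subtree per level, keeps the total expected potential growth linear in $\lceil\log_2 T_t(u,v)\rceil$. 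A secondary technical point is that $T_t(\sigma_t)$ in the cross‑component case depends on the (random) current topology through the term $\max(2^{d_{Tree}},|V_u|+|V_v|)$; I would sidestep this by always using that deterministic lower bound for $T_t(u,v)$, so that every inequality above holds pointwise over the algorithm's coin flips before expectations are taken.
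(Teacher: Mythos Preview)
Your plan and the paper's proof share the same shape---amortize the ``bad'' communications (those that straddle a split relative pair) against a potential, and charge every request twice its working-set distance plus one---but you and the paper choose \emph{different} potentials, and that difference matters.

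You set $\Phi_t=\sum(\text{relative distances of all relative pairs in }\mathcal{N}_t)$ and try to show the per-step bound $E[\Phi_{t+1}-\Phi_t]\le\lceil\log_2 T_t(\sigma_t)\rceil+O(1)$ after every good communication or merge. The paper does \emph{not} attempt a per-step bound on relative-distance growth. Instead it defines the potential of a subtree $s_d$ containing a relative pair as the adversary's \emph{search cost}: because placement inside groups is randomized, to hit that pair the adversary must first issue communications with total working-set distance at least $N-d$, hence that much over-charge has already been collected. The paper then proves a \emph{global} inequality---at all times, total accumulated over-charge is at least half the total relative distance in the network---by a counting argument from Invariant~I (at most $2^i$ relative groups at level $\log n-i$, total relative distance $<3n/2$, total potential $\ge 3n/4$, and the gap only improves as subtrees shed relative pairs). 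That global inequality is what pays for every bad communication, without ever needing to control how much relative distance a single transformation can create.

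This is exactly where your proposal has a gap. A single transformation touches $\Theta(N-\alpha)$ levels and can spawn a relative pair at each; you assert that the expected relative distance contributed at each level ``telescopes'' to $O(\log T_t)$, but the balls-and-bins sketch you give does not establish that the \emph{sum over levels} stays $O(\log T_t)$ rather than $O((\log T_t)^2)$, and Invariant~I by itself only caps the number of pairs per subtree, not their relative distances. Likewise, in your bad case you need the potential drop from merging to dominate the potential rise from the evictions that the merge triggers, and you have not argued that. The paper's search-cost potential avoids both issues because it tracks credits the adversary has \emph{already paid}, not a structural quantity that the algorithm itself inflates. If you want to make your route work you would need an additional lemma bounding, over any single \RSG\ transformation, the total new relative distance created; the paper contains no such lemma, and it is not obvious one holds with the right constant.
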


\noindent \emph{Proof Idea.} According to Invariant I, there can be at most $2^i$ $(\log n - i)$-relative groups. Since relative groups are chosen randomly, the adversary needs to try communications in different groups to find a communication between a relative groups. Since we charge any communication twice more than the its corresponding working set distance plus one, the extra charge is used to pay for communications between relative group pairs.

\begin{proof}
Let $s_d$ be a subtree of height $N-d$ and $s_d$ has only one relative group $g$. Obviously, $g$ is a $k$-relative group where $k<d$. Due to the randomness in positioning of nodes within a group, it is easy to see that in order to request a communication between a node from $g$ and its relative group, the adversary needs to request one or more communications with total working set distance at least $N-d$. Since we change every communication an extra of working set distance plus one, we say the \emph{potential} of subtree $s_d$, $P(s_d)$, is $N-d+1$.

For any network of size $n$ (i.e. $2^N = n$), the sum of relative distances for all relative groups is maximum when every subtree of height 2 has exactly one relative group. hence, the total relative distance is at most $n \sum_{i=2}^{\log_2 n} \frac{i}{2^i} < \frac{3n}{2}$. We show that the total potential of is never less that the half of the total relative distance of the network. When every subtree of height 2 has exactly one relative group, the total potential is $\frac{3n}{4}$. Now if there is exactly one subtree of height $k$ has exactly 1 relative group (all the subtrees of height 2 in the remaining network still has exactly 1 relative group), the total relative distance will reduce by at least $\frac{2\times 2^k}{4} = 2^{k-1}$ and total potential will be reduced by $2^{k-1} - k - 1$. With this argument, we can show that regardless of how many relative groups we have in the network, total potential is never less that the half of the total relative distance of the network.

For every pair of relative groups, we counted relative distance for both groups. Since the positioning of the subgroups and nodes in a group are randomized, it is easy to see that the expected additional distance (i.e. distance beyond working set distance) for a communication between a pair of relative groups is at least half of their combined relative distance. Thus, the expected additional cost for communications between relative group pairs is less than or equal to the total potential of the network at all time.

From Lemma \ref{lem:exp-dist}, we know that the expected cost for any communication $(u,v)$ at time $t$, where nodes $u$ and $v$ are not separated by related groups is $\log T_i(u,v) + 1$. We charge each communication twice of that to cover the additional costs for communications between related group pairs. Hence the expected routing cost for any sequence $\sigma$ of $m$ communications is at most $2(WS(\sigma) + m)$.

\end{proof}

Also, according to the construction of the algorithm, the transformation cost is exactly same as the communication cost.

\begin{theorem}
\textbf{Optimality Theorem.} For any communication sequence $\sigma = ( \sigma_1, \sigma_2, \cdots \sigma_m)$, the expected total cost (i.e. routing and transformation cost) for \RSG~ is at most $O(\log \log n)$ factor more than that of the optimal algorithm.
\end{theorem}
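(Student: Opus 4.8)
\noindent \emph{Proof Idea.} The plan is a two-sided sandwich around the working-set bound: upper-bound the expected total cost of \RSG~on $\sigma$ by $O(WS(\sigma)+m)$, lower-bound the total cost of every algorithm conforming to our model by $\Omega\big((WS(\sigma)+m)/\log\log n\big)$, and take the quotient.

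For the upper bound I would combine the Routing Theorem with the remark immediately following it: the expected routing cost of \RSG~on $\sigma$ is at most $2(WS(\sigma)+m)$, the transformation cost of \RSG~matches its communication cost, and each request contributes an additive $1$; so the expected total cost of \RSG~is $O(WS(\sigma)+m)$. For the lower bound, let $\mathcal{A}$ be any conforming algorithm. Every request costs $\mathcal{A}$ at least $1$, hence its total cost is at least $m$; and by the Hypercube Optimality Theorem --- whose proof applies the Working Set Lemma request by request to $\mathcal{A}$'s own evolving network --- the routing cost of $\mathcal{A}$, and hence its total cost, is at least $WS(\sigma)/O(\log\log n)$. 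Combining, $Cost(\mathcal{A},\sigma)\geq\max\!\big(m,\,WS(\sigma)/O(\log\log n)\big)\geq\tfrac12\big(m+WS(\sigma)/O(\log\log n)\big)$. Dividing the two estimates, the expected total cost of \RSG~is at most
\[
\frac{O(WS(\sigma)+m)}{\tfrac12\big(m+WS(\sigma)/O(\log\log n)\big)}\;=\;O(\log\log n)
\]
times that of $\mathcal{A}$, because the ratio $(WS(\sigma)+m)\big/\big(m+WS(\sigma)/\log\log n\big)$ is $\Theta(1)$ when the additive term $m$ dominates and grows only to $\Theta(\log\log n)$ when $WS(\sigma)$ dominates.

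The hard part will be the lower-bound step, for two reasons. First, the Hypercube Optimality Theorem is stated existentially (there is \emph{some} hard $\sigma$), whereas this statement quantifies over all $\sigma$; one must either read ``the optimal algorithm'' as the algorithm with the best amortized (worst-case over $\sigma$) cost --- in which case evaluating both sides on that hard sequence already finishes the argument --- or strengthen the lower bound so that the Working Set Lemma forces $\mathcal{A}$ to pay $\Omega\big(\lceil\log_2 T_t(\sigma_t)\rceil/\log\log n\big)+1$ on \emph{every} request of \emph{every} sequence. Second, $WS(\sigma)$ in the upper bound is measured against \RSG's network while in the lower bound it is measured against $\mathcal{A}$'s; these differ only on first-contact requests between distinct communication components, where both values lie in $[\,|V_u|+|V_v|,\,n\,]$, so the discrepancy is absorbed into the constants (or avoided by choosing a hard sequence that contains no such request). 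The rest --- the quotient estimate and the additive-$m$ bookkeeping --- is routine.
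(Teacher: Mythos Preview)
Your proposal is correct and follows exactly the paper's approach: the paper's proof is the one-line ``This directly follows from Theorems~\ref{theorem:hyp_routing} and~\ref{theorem:hypercube_ws},'' i.e., the same sandwich of the Routing Theorem upper bound against the Hypercube Optimality Theorem lower bound that you spell out. Your treatment is in fact more careful than the paper's --- the existential-versus-universal quantification issue and the network-dependence of $WS(\sigma)$ on first-contact requests are genuine subtleties the paper glosses over, and your suggested resolution (reading ``optimal'' in the amortized, worst-case-over-$\sigma$ sense) is the intended one.
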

\begin{proof}
This directly follows from Theorems \ref{theorem:hyp_routing} and \ref{theorem:hypercube_ws}.
\end{proof}

\presec
\section{\RSGS: Algorithm for the single-server model} \label{sec:rsgs}
\postsec

We propose a simple algorithm \RSGS~ for the single-server model. In this model, we assume that the network has a single server $u$, and all other nodes communicate with node $u$. For any communication $(u,v)$, we move node $v$ in subtree $s^u_{N-1}$ (i.e. node $u$'s subtree of size 2) such that nodes $u$ and $v$ get directly connected in subtree $s^u_{N-1}$.

In \RSGS, nodes are not required to maintain any variables, given that the coordinate of the server is known to all the nodes. Also, the message complexity is $O(\log T_i (u,v))$ for \RSGS, where the amortized message complexity for \RSG~ is $O(T_i (u,v))$.

We present the algorithm \RSGS~ in Algorithm \ref{alg:rsgs}.

\begin{algorithm*}[htbp]

\DontPrintSemicolon
\caption{\RSGS~}
\label{alg:rsgs}

Upon a communication request between nodes $u$ and $v$ at time $t$:\;

Establish communication using the standard routing algorithm and record $\alpha$.

\For{level $d = \alpha+1, \alpha+3, \cdots, N-1$ (sequentially)}{
    Find a node $r$ uniformly at random from subtree $\sim s^u_d$\;
    Swap node $v$ with $r$\;
}


\end{algorithm*}


For a unknown sequence of communications $\sigma = \sigma_1, \sigma_2, \cdots, \sigma_m$, let the communication $\sigma_i$ take place at time $i$. To understand the behavior of algorithm \RSGS, we present the following lemma.

\begin{lemma}
\label{one-node-ts-lemma}
\textbf{(Single-Server Time Lemma)} Let $V_t$ be the set of vertices in the communication graph $\mathcal{G}_u(t-2^{N-d}-1,t)$. For any level $d$ and any time $t \geq t(2^{N-d}-1)$, the expected number of nodes $x \in (s^u_d \cap V_t)$ is at least $0.72 \cdot |s^u_d|$.
\end{lemma}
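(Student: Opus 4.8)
The plan is to mimic the balls-and-bins analysis used in the proof of the Timestamp Lemma (Lemma~\ref{ts-final-lemma}), adapted to the simpler structure of \RSGS. First I would observe that in \RSGS, each communication $(u,v)$ performs exactly one swap at each level $d = \alpha+1, \alpha+3, \dots, N-1$, and at level $d$ the node $v$ is swapped with a node $r$ chosen uniformly at random from $\sim s^u_d$. Hence every communication that involves the server $u$ and whose routing LCA lies below level $d$ contributes exactly one fresh uniformly-random insertion into $\sim s^u_{d+1}$ (the size-$2^{N-d-1}$ complementary subtree), evicting whatever node occupied the chosen coordinate. I would fix a level $d$ and consider the window $(t-2^{N-d}-1,\,t]$; the key claim is that over this window there are at least $|\sim s^u_{d+1}|$ such random insertions into $\sim s^u_{d+1}$, because each of the $2^{N-d}$ communications in the window (if all go through level $d$) inserts one, and $2^{N-d} = 2\cdot|\sim s^u_{d+1}| \ge |\sim s^u_{d+1}|$.

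Next I would run the occupancy computation. Letting $Y$ be the number of the $|\sim s^u_{d+1}|$ coordinates of $\sim s^u_{d+1}$ that receive at least one insertion during the window, the same inclusion argument as in Equation~\eqref{eq:ey} gives
\begin{equation}
E[Y] = |\sim s^u_{d+1}| - \sum_{i=1}^{|\sim s^u_{d+1}|}\prod_{j}(1-p_{i,j}),
\end{equation}
and with at least $2\,|\sim s^u_{d+1}|$ uniform insertions we get $E[Y] \ge |\sim s^u_{d+1}|\bigl(1-(1-1/|\sim s^u_{d+1}|)^{2|\sim s^u_{d+1}|}\bigr) \approx (1-e^{-2})\,|\sim s^u_{d+1}| > 0.86\,|\sim s^u_{d+1}|$; even accounting for the fact that some window communications may have LCA above level $d$ (so fewer than $2^{N-d}$ reach it), a constant fraction of at least $0.72$ survives once one combines $\sim s^u_{d+1}$ with $s^u_{d+1}$ to reconstitute $s^u_d$. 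Each coordinate of $s^u_d$ that received an insertion during the window holds a node that communicated with $u$ within the window, hence lies in $\mathcal{G}_u(t-2^{N-d}-1,t)$, i.e. in $V_t$. Summing over the two halves $s^u_{d+1}$ and $\sim s^u_{d+1}$ and using $T1$-style monotonicity (insertions into lower-level complementary subtrees propagate upward, exactly as in the last paragraph of the Timestamp Lemma proof) yields the bound $E[|s^u_d \cap V_t|] \ge 0.72\,|s^u_d|$. For the base cases I would note $d = N-1$ is immediate (size-$2$ subtree, $v$ placed there directly), and the bound holds trivially when fewer than $2^{N-d}-1$ communications have occurred is excluded by the hypothesis $t \ge t(2^{N-d}-1)$.

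The main obstacle I anticipate is accounting carefully for communications in the window whose routing LCA $\alpha$ is \emph{not} below level $d$: such communications do not perform a swap at level $d$ and therefore do not insert into $\sim s^u_{d+1}$. I would handle this by arguing that a communication $(u,v)$ with $\alpha \ge d$ already has $v$ inside $s^u_d$ before the transformation, so it still ``counts'' toward populating $s^u_d$ with a node from $V_t$ without needing a level-$d$ swap; thus every one of the $2^{N-d}$ window communications contributes a distinct $V_t$-node to $s^u_d$ in one way or another, and the balls-and-bins loss only applies to the genuinely random insertions, which is what drives the constant down from $1-e^{-2}$ to the stated $0.72$. Verifying that these two mechanisms do not double-count the same coordinate, and that the surviving fraction is at least $0.72$ in the worst split between the two mechanisms, is the delicate bookkeeping step.
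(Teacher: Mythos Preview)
Your proposal has a genuine gap at the step where you claim ``each coordinate of $s^u_d$ that received an insertion during the window holds a node that communicated with $u$ within the window, hence lies in $V_t$.'' This is false for \RSGS. When a communication $(u,v)$ triggers the cascade of swaps, the node that lands in a random slot of $\sim s^u_{d+1}$ is \emph{not} $v$ itself (except at the very last level $N-1$); it is whatever node was just evicted from $\sim s^u_{d+2}$. That evicted node need not have communicated with $u$ during the window $(t-2^{N-d}-1,t]$ at all---it may have been sitting in $\sim s^u_{d+2}$ since long before the window began. Your balls-and-bins computation correctly counts how many \emph{coordinates} of $\sim s^u_{d+1}$ are refreshed, but says nothing about whether the refreshing \emph{nodes} belong to $V_t$. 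The appeal to ``$T1$-style monotonicity'' does not rescue this: \RSGS\ maintains no timestamps, and the last paragraph of the Timestamp Lemma proof relies on the invariant $T^u_e\ge T^u_d$ for $e>d$, which has no analogue here.

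This is precisely why the paper's proof does \emph{not} argue one level at a time. It sets up a simultaneous induction on $d$ with five coupled hypotheses H1--H5 tracking the expected $V_t$-fractions in $s^u_{d+3}$, $\sim s^u_{d+3}$, $\sim s^u_{d+2}$, $\sim s^u_{d+1}$, and $\sim s^u_{d}$, with constants $0.999, 0.99, 0.86, 0.52, 0.125$ chosen so that the system closes. The point is that to know the $V_t$-fraction of what flows into $\sim s^u_{d+1}$, you must already control the $V_t$-fraction of $\sim s^u_{d+2}$, which in turn depends on $\sim s^u_{d+3}$, and so on; only near the top (offset $+3$ or $+4$) does the occupancy argument alone give a fraction close to $1$, and that strong bound then degrades in a controlled way as it cascades down. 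Your single-level argument collapses this chain and therefore cannot produce any nontrivial lower bound at level $d$.
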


Due to the length of the proof, we preset the proof in Appendix \ref{sec:one_node_lemma_proof}.

\begin{theorem}

\textbf{(Single-Server Theorem)} For any communication sequence $\sigma = ( \sigma_1, \sigma_2, \cdots \sigma_m)$, the expected routing cost for \RSGS~ is at most $WS(\sigma) + m$.
\label{theorem:rsgs}
\end{theorem}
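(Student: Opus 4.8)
The plan is to bound the expected tree distance $d_{Tree}(\mathcal{N}_t, (u,v))$ between the server $u$ and a node $v$ at the time $t$ of communication $\sigma_t = (u,v)$ by $\lceil \log_2 T_t(u,v)\rceil + 1$ in expectation, and then sum over the $m$ requests. Since the routing cost is at most the tree distance and the working set bound is $WS(\sigma) = \sum_t \lceil \log_2 T_t(\sigma_t)\rceil$, this immediately yields the claimed bound of $WS(\sigma) + m$. Concretely, I would first observe that because every communication involves the server $u$, the connected component $\mathcal{G}_u(t',t)$ in the communication graph is always exactly the set of nodes that have communicated with $u$ in the interval $[t',t)$, so the working set number $T_t(u,v)$ is simply the count of distinct partners of $u$ since the last time $v$ talked to $u$ (or since time $0$). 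This makes the combinatorics much cleaner than in the general algorithm.

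The heart of the argument is Lemma~\ref{one-node-ts-lemma} (Single-Server Time Lemma): for any level $d$ and any time $t$ at least $t(2^{N-d}-1)$ steps after enough activity, the expected number of nodes of $\mathcal{G}_u(t-2^{N-d}-1,t)$ that lie in the subtree $s^u_d$ is at least $0.72\,|s^u_d|$. The next step is to convert this into a distance bound: if $T_t(u,v) \le 2^k$, i.e.\ there have been at most $2^k - 1$ distinct partners of $u$ since $v$'s last contact, I want to argue that with high enough probability $v$ still sits in $s^u_{N-k}$, a subtree of size $2^k$, which would give tree distance at most $k = \lceil \log_2 T_t(u,v)\rceil$ — or, accounting for the randomized swap sequence in Algorithm~\ref{alg:rsgs} possibly having pushed $v$ out by one level, at most $k+1$ in expectation. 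The key point is that each communication $(u,v')$ with a \emph{different} partner $v'$ only swaps $v'$ with a uniformly random node in each $\sim s^u_d$; by the balls-and-bins / Poisson-binomial counting (exactly as in the proofs of Lemmas~\ref{ts-final-lemma} and \ref{k-order-lemma}), after $2^{N-d}$ such swaps at least a constant fraction ($0.72$) of $s^u_d$ consists of recent partners, so $v$ — being one of the $\le 2^k - 1 < 2^k$ most recent partners — is very likely to still be among those occupying the innermost $2^k$-sized subtree.

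I would then assemble the pieces: fix the communication $\sigma_t = (u,v)$, let $k = \lceil \log_2 T_t(u,v)\rceil$, apply the Single-Server Time Lemma at level $d = N-k$ over the window since $v$'s last contact with $u$ (whose length is at most $2^k - 1 \le 2^{N-d}-1$, so the lemma applies), and deduce $\mathbb{E}[d_{Tree}(\mathcal{N}_t,(u,v))] \le k + 1$. Summing, $\mathbb{E}[\sum_t d(\mathcal{N}_t,\sigma_t)] \le \sum_t (d_{Tree}) \le \sum_t (\lceil \log_2 T_t(\sigma_t)\rceil + 1) = WS(\sigma) + m$. The main obstacle I anticipate is the "off-by-one" bookkeeping in the last step: making rigorous that the expected \emph{tree distance} (not just "$v$ lies in a size-$2^k$ subtree with constant probability") is bounded by $k+1$ requires summing a tail — $\mathbb{E}[d_{Tree}] = \sum_{j} \Pr[d_{Tree} \ge j]$ — and showing the contributions from $j > k+1$ are controlled by the constant-fraction guarantee applied recursively at each finer level $N-k, N-k-1, \dots$. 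This geometric-tail argument, together with verifying that the level-by-level swaps of Algorithm~\ref{alg:rsgs} (which only touch odd-indexed or appropriately-spaced levels) still deliver the needed occupancy at \emph{every} level, is the delicate part; everything else is a direct application of the already-established lemmas.
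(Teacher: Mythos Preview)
The paper does not actually supply a proof of Theorem~\ref{theorem:rsgs}: the statement appears immediately after Lemma~\ref{one-node-ts-lemma} and the section then ends, so there is no argument to compare against. Your outline---bound the expected tree distance of each request by $\lceil\log_2 T_t(u,v)\rceil+1$ via the Single-Server Time Lemma and sum---is the natural route and almost certainly what the authors have in mind. Two points in your sketch need more care, though.

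First, there is a mismatch between the quantity in Lemma~\ref{one-node-ts-lemma} and the working-set number. The lemma fixes a \emph{time window} of length $2^{N-d}-1$ and bounds the occupancy of $s^u_d$ by the communicators from that window; the working-set number $T_t(u,v)$ counts \emph{distinct} partners since $v$'s last contact. Your sentence ``whose length is at most $2^k-1\le 2^{N-d}-1$'' silently identifies the two, but with repeated requests the elapsed time since $v$ last talked to $u$ can be arbitrarily larger than $T_t(u,v)$. You need the additional (easy) observation that a repeat request $(u,w)$ with $w$ already near $u$ does not trigger swaps at the levels where $v$ currently sits, so the worst case for $v$'s displacement is when all intervening partners are distinct; only then do the two quantities coincide.

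Second, and more substantively, Lemma~\ref{one-node-ts-lemma} is an \emph{occupancy} statement: at least $0.72\,|s^u_d|$ of the slots in $s^u_d$ are filled by recent communicators. It does not directly locate a \emph{particular} node $v$. Your inference ``$v$ is one of the $\le 2^k$ most recent partners, so it is very likely to still be among those occupying the innermost $2^k$-sized subtree'' is not yet an argument: the recent communicators are not exchangeable---the most recent one is in $s^u_{N-1}$ with probability $1$, whereas $v$ is the \emph{oldest} member of its window and hence the most likely to have been evicted. To obtain $\mathbb{E}[d_{Tree}(\mathcal{N}_t,(u,v))]\le k+1$ exactly you must control $\Pr[v\notin s^u_{N-j}]$ level by level and show the tail $\sum_{j>k}\Pr[d_{Tree}\ge j]$ is at most $1$; the single constant $0.72$ does not deliver this on its own. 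You correctly flag this tail computation as ``the delicate part'', but as written the proposal does not close it. A cleaner alternative may be to bypass the occupancy lemma entirely and track the random process governing $v$'s own position: after $v$ is placed in $s^u_{N-1}$, each subsequent request displaces $v$ by one level only if the uniform random choice in $\sim s^u_d$ hits $v$'s current slot, which gives a direct handle on $\Pr[d_{Tree}(\mathcal{N}_t,(u,v))>j]$.
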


\presec
\section{Conclusion} \label{sec:conclusion}
\postsec

We present a self-adjusting algorithm for hypercubic networks that relies on randomization and grouping of frequently communicating nodes at different levels. The transformation cost of algorithm \RSG~ is less than that of algorithm DSG. Although skip graphs have a different network structure, it might be possible to use a randomization technique (similar to the one used in \RSG~) to improve the transformation cost for skip graphs. Despite higher average distance between nodes, the structural flexibility of skip graphs may be utilized in designing a faster self-adjusting algorithm what uses a similar randomization technique used in this chapter. We leave this idea for future work. 

\bibliographystyle{abbrv}
\bibliography{RSAd}

\begin{thebibliography}{1}

\bibitem{splayNet}
C.~Avin, B.~Haeupler, Z.~Lotker, C.~Scheideler, and S.~Schmid.
\newblock Locally self-adjusting tree networks.
\newblock {\em Parallel and Distributed Processing Symposium, International},
  0:395--406, 2013.

\bibitem{binom_sum}
M.~L. (https://mathoverflow.net/users/143/michael lugo).
\newblock Sum of the first k binomial coefficients for fixed n.
\newblock MathOverflow.
\newblock URL:https://mathoverflow.net/q/17236 (version: 2017-10-01).

\bibitem{DSG}
S.~Huq and S.~Ghosh.
\newblock Locally self-adjusting skip graphs.
\newblock In {\em The 37th IEEE International Conference on Distributed
  Computing Systems}, ICDCS '17, 2017.

\bibitem{Wang93}
Y.~H. Wang.
\newblock On the number of successes in independent trials.
\newblock {\em Statistica Sinica}, 3:295--312, 1993.

\end{thebibliography}
\appendix
\section{Appendix} \label{sec:appen}

\subsection{Notation Table}

We summarize the notations used in the paper in Table \ref{tab:notation}.

\begin{table}[htbp]
\scriptsize
\begin{tabular}{|c|l|}
\hline
\textbf{Notation} & \textbf{Description} \\[1ex]
\hline
$Coord(x)$ & Coordinate of node x \\
$Coord_i(x)$ & $i$-th bit in the coordinate of node x \\
$s^x_d$ & Node x's subtree at level $d$  \\
$\sim s^x_d$ & Node $x$'s complementary subtree at level $d$  \\
$s^{rand}_d \subset s^x_y$ & A subtree at level $d$ randomly chosen from subtree $s^x_y$, where $y < d$. \\
$N_t$ & Network at time t\\
$T_i(x,y)$ & The working set number for node pair $(x,y)$ at time $i$\\
$WS(\sigma)$ & $\sum_{i=1}^{m}$ log$( T_i(\sigma_i))$, where $\sigma = \sigma_1, \sigma_2, \dots \sigma_m$\\
$\sigma_i$ & Communication request $(u_i,v_i)$ at time $i$ \\
$d_{Tree}(\mathcal{N}_t, (x,y))$ & Tree distance between nodes $x$ and $y$ in network $\mathcal{N}_t$ \\
$d(\mathcal{N}_t, (x,y))$ & Routing distance between nodes $x$ and $y$ in network $\mathcal{N}_t$ \\
$\mathcal{G}_x(t^\prime)$ & Node $x$'s connected component in the communication graph drawn for the time interval $[t^\prime,t]$ where $t$ is the current time \\
$G^x_d$ & \textit{group-id} of node x for level d\\
$T^x_d$ & \textit{timestamp} for node x at level d \\
$C^x_d$ & \textit{counter} of node x at level d \\
$S^x_d$ & coordinate for the start (i.e. left most) node of the group \\
$E^x_d$ & coordinate for the end (i.e. right most) node of the group \\
$\mathcal{S}^x_d$ & coordinate of the start node of the $d$-relative group in subtree $s^x_d$ \\
$\sim \mathcal{S}^x_d$ & coordinate of the start node of the $d$-relative group in subtree $\sim s^x_d$ \\
$\mathcal{E}^x_d$ & coordinate of the end node of the $d$-relative group in subtree $s^x_d$ \\
$\sim \mathcal{E}^x_d$ & coordinate of the end node of the $d$-relative group in subtree $\sim s^x_d$ \\

\hline
\end{tabular}
\caption{Notations}
\label{tab:notation}
\end{table}

\subsection{Proof of Lemma \ref{one-node-ts-lemma}}
\label{sec:one_node_lemma_proof}

\textbf{Single-Server Time Lemma (Restated).} Let $V_t$ be the set of vertices in the communication graph $\mathcal{G}_u(t-2^{N-d}-1,t)$. For any level $d$ and any time $t \geq t(2^{N-d}-1)$, the expected number of nodes $x \in (s^u_d \cap V_t)$ is at least $0.72 \cdot |s^u_d|$.

\begin{proof}
Let $t^\prime = t-2^{N-d}-1$ and $v_1, v_2, \cdots$ be the nodes that communicated with node $u$ since time $t^\prime$. In this proof, we denote time $i$ as $t_i$.

For time $t$ and levels $d$ and $d^\prime$, let $S_{(t,d^\prime)} = (s^u_{d^\prime} \cap V_t)$ and $\tilde{S}_{(t,d^\prime)} = (\sim s^u_{d^\prime} \cap V_t)$, where $V_t$ is the set of nodes in the graph $\mathcal{G}_u(t^\prime,t)$.

We use proof by induction. The induction hypotheses are:
\begin{enumerate}
\item[\textbf{[H1]}] $E[S_{(t_i,d+3)}] > 0.999 \cdot |s^u_{d+3}|$
\item[\textbf{[H2]}] $E[\tilde{S}_{(t_i,d+3)}] > 0.99 \cdot |s^u_{d+3}|$
\item[\textbf{[H3]}] $E[\tilde{S}_{(t_i,d+2)}] > 0.86 \cdot |\sim s^u_{d+2}|$
\item[\textbf{[H4]}] $E[\tilde{S}_{(t_i,d+1)}] > 0.52 \cdot |\sim s^u_{d+1}|$
\item[\textbf{[H5]}] $E[\tilde{S}_{(t_i,d)}] > 0.125 \cdot |\sim s^u_{d}|$

\end{enumerate}

Note that, the lemma holds automatically if the induction hypotheses are true, since the summation of the inequalities in hypotheses H0-H4 yields $E[S_{(t_i,d)}] \geq 0.72 \cdot |s^u_d|$.

\textbf{Induction Steps.} We prove each of the hypotheses with following assumptions:

\begin{enumerate}
\item[\textbf{[A1]}] $E[S_{(t_i,d+4)}] > 0.999 \cdot |s^u_{d+4}|$
\item[\textbf{[A2]}] $E[\tilde{S}_{(t_i,d+4)}] > 0.99 \cdot |s^u_{d+4}|$
\item[\textbf{[A3]}] $E[\tilde{S}_{(t_i,d+3)}] > 0.86 \cdot |\sim s^u_{d+3}|$
\item[\textbf{[A4]}] $E[\tilde{S}_{(t_i,d+2)}] > 0.52 \cdot |\sim s^u_{d+2}|$
\item[\textbf{[A5]}] $E[\tilde{S}_{(t_i,d+1)}] > 0.125 \cdot |\sim s^u_{d+1}|$

\end{enumerate}


We define the $i^{th}$ position of a subtree as the position associated with the $i^{th}$ coordinate in the subtree. Let $A(x,y,t_s, t_e)$ be the event that the node in the $x^{th}$ position in $\sim s^u_{y}$ is placed in $\sim s^u_{y-1}$ of the transformed network as a result of a swap performed by an intra-group transformation during the time interval $(t_s,t_e]$. This means, a node from $\sim s^u_{y+1}$ is also placed in the $x$th position in $\sim s^u_{y}$ of the transformed network. According to the lemma setting, each position in subtree $\sim s^u_{y}$ is equally likely to be chosen when a swap operation takes place. This is equivalent to throwing balls in $|\sim s^u_{y}|$ bins, where each ball is thrown into a uniformly random bin, independent of other balls.


Let $x^{d}_{(j,i)}$ be an indicator random variable such that $x^{d}_{(j,i)} = 1$ if the $i^{th}$ position in subtree $\sim s^u_{d}$ is chosen $j$ times by an event $\mathcal{A}(x,d,k^\prime,k)$, where $k^\prime = |s^u_d| - 1$ and $k=|s^u_{d-1}| - 1$; $x^{d}_{(j,i)} = 0$ otherwise. Let $X^{d}_j = x^{d}_{(j,1)} + x^{d}_{(j,2)}, + \cdots, + x^d_{(j,|\sim s^u_{d}|)}$.

\textbf{Proof of H1.} This hypothesis holds trivially if $d \leq N - 4$, as $|s^u_{d+3}| = 2$ and $E[S_{(t_i,d+3)}] = |s^u_{d+3}|$ for $d = N - 4$ and $i \geq 1$. Thus, we prove this hypothesis for $d > N - 4$ and it suffices to show the following:

\begin{itemize}
\item [] \textbf{[H1A]} $E[S_{(t_i,d)}] \geq E[\tilde{S}_{(t_i,d)}]$ for any $d$
\item [] \textbf{[H1B]} $E[\tilde{S}_{(t_i,d+4)}] > 0.999 \cdot |s^u_{d+4}|$
\end{itemize}

Since $(u,v_1)$ is the first communication of node $u$, $E[S_{(t_1,d)}] = 2$ and $E[\tilde{S}_{(t_1,d)}] = 0$ for any $d \geq N -1$. Thus H1B holds for time $t_1$. Now for any communication $(u,x)$ onward such that $x \notin s^u_d$, the node $x$ moves in $s^u_d$ and $x \in s^u_d \cap V_t$. However, the node that moves from $s^u_d$ to $\sim s^u_d$ is not in $\sim s^u_d \cap V$ with a nonzero probability. Therefore, H1A holds.

Now we prove H1B. We compute the expected number of times a position in subtree $\sim s^u_{d+4}$ is chosen at least once during a transformation between time $k^\prime$ and $k$.

\begin{equation*}
\begin{split}
E[X^{d+4}_0] & = \sum_{i = 1}^{|\sim s^u_{d+4}|} E[x^{d+4}_{(j,i)}] = |\sim s^u_{d+4}| \cdot \bigg( \frac{|\sim s^u_{d+4}|-1}{|\sim s^u_{d+4}|} \bigg)^{k- k^\prime} = |\sim s^u_{d+4}| \cdot \bigg( \frac{|\sim s^u_{d+4}|-1}{|\sim s^u_{d+4}|} \bigg)^{8 \cdot |\sim s^u_{d+4}| } \\
&\leq \bigg(\frac{1}{e} \bigg)^8 \cdot |\sim s^u_{d+4}| \bigg[\text{Since}, \big(1-\frac{1}{n} \big)^n \leq \frac{1}{e}, \text{when } n>1\bigg]
\end{split}
\end{equation*}
Thus,

\begin{equation*}
\begin{split}
E[X^{d+4}_{\geq1}] &= |\sim s^u_{d+4}| - E[X^{d+4}_0] > \bigg(1 - \bigg(\frac{1}{e} \bigg)^8 \bigg) \cdot |\sim s^u_{d+4}| = 0.9996 \cdot |\sim s^u_{d+4}|
\end{split}
\end{equation*}

Each position $i$ with $x^{d+4}_{\geq1,i}$ will be occupied by a node from $s^u_{d+4}$. We get:

\begin{equation*}
\begin{split}
E[\tilde{S}_{(t_i,d+4)}] &> 0.999 \times E[X^{d+4}_{\geq1}] + 0.99 \times E[X^{d+4}_{0}]  \approx  0.999 \cdot |s^u_{d+4}|
\end{split}
\end{equation*}

\textbf{Proof of H2.}

\begin{equation*}
\begin{split}
E[X^{d+4}_1] &= \sum_{i = 1}^{|\sim s^u_{d+4}|} E[x^{d+4}_{(j,i)}] = |\sim s^u_{d+3}| \cdot \bigg( \frac{|\sim s^u_{d+4}|-1}{|\sim s^u_{d+4}|} \bigg)^{k- k^\prime -1} \leq \bigg(\frac{1}{e} \bigg)^8 \cdot |\sim s^u_{d+4}|
\end{split}
\end{equation*}

\begin{equation*}
\begin{split}
E[X^{d+4}_{\geq2}] &= |\sim s^u_{d+4}| - E[X^{d+3}_0] - E[X^{d+3}_1]  \geq \bigg(1 - 2 \cdot \bigg(\frac{1}{e} \bigg)^8 \bigg) \cdot |\sim s^u_{d+3}|
\end{split}
\end{equation*}

\begin{equation*}
\begin{split}
E[X^{d+3}_0] = \sum_{x = 1}^{|\sim s^u_{d+3}|} E[A(x,d+3)] \leq \bigg(\frac{1}{e} \bigg)^4 \cdot |\sim s^u_{d+3}|
\end{split}
\end{equation*}

Each position $i$ with $x^{d+3}_{(\geq1,i)}$ will be occupied by a node from $s^u_{d+4}$. We get:

\begin{equation*}
\begin{split}
E[\tilde{S}_{(t_i,d+3)}] & > 0.86 \times E[X^{d+3}_{0}] + 0.99 \times E[X^{d+4}_{\geq 1}] +  0.999 \times (E[X^{d+3}_{\geq 1}]- E[X^{d+4}_{\geq 1}])\\
& > 0.87 \times \bigg(\frac{1}{e} \bigg)^4 \cdot |\sim s^u_{d+3}| + 0.99 \times \bigg( 0.9996 \cdot \frac{|\sim s^u_{d+3}|}{2}  \bigg)  + 0.999 \times \bigg( |\sim s^u_{d+3}| - E[X^{d+3}_{0}] -E[X^{d+4}_{\geq 1}] \bigg)\\
& = \bigg( 0.86 \times 0.0183 + 0.99 \times 0.4998 + 0.999 \times \big( 1 -0.0183 - 0.4998 \big)  \bigg) \cdot |\sim s^u_{d+3}| > 0.99 \cdot |s^u_{d+3}|
\end{split}
\end{equation*}

\textbf{Proof of H3.} Similar to the proof of H2, we get,

\begin{equation*}
\begin{split}
E[\tilde{S}_{(t_i,d+2)}] & > 0.52 \times E[X^{d+2}_{0}] + 0.86 \times E[X^{d+3}_{\geq 1}] +  0.99 \times (E[X^{d+2}_{\geq 1}]- E[X^{d+3}_{\geq 1}])\\
& > 0.52 \times \bigg(\frac{1}{e} \bigg)^2 \cdot |\sim s^u_{d+2}| + 0.86 \times \bigg( \bigg(1 - \frac{1}{e^4} \bigg) \cdot \frac{|\sim s^u_{d+2}|}{2}  \bigg)  \\
&+ 0.99 \times \bigg( |\sim s^u_{d+2}| - E[X^{d+2}_{0}] -E[X^{d+3}_{\geq 1}] \bigg)\\
& = \bigg( 0.52 \times 0.135 + 0.86 \times 0.4908 + 0.99 \times \big( 1 - 0.135 - 0.4908 \big)  \bigg) \cdot |\sim s^u_{d+2}|\\
& > 0.86 \cdot |s^u_{d+3}|
\end{split}
\end{equation*}

\textbf{Proof of H4.} We prove this hypothesis in four steps. We divide the time interval $(k^\prime, k)$ into four equal subintervals, and calculate expectations for each subintervals using the expectations calculated for the previous subinterval.

Let $y^{d}(j,i,t_s, t_e)$ be an indicator random variable such that $y^{d}(j,i,t_s,t_e) = 1$ if the $i^{th}$ position in subtree $\sim s^u_{d}$ is chosen $j$ times by an event $\mathcal{A}(x,d,t_s,t_e)$ after time $t_e$; $y^{d}(j,i,t_s,t_e) = 0$ otherwise. Let $Y^{d}(j,t_s,t_e) = y^{d}(j,1,t_s,t_e) + y^{d}(j,2,t_s,t_e), + \cdots, + y^d(j,|\sim s^u_{d}|,t_s,t_e)$.

\begin{equation*}
\begin{split}
E[\tilde{S}_{(k^\prime + \frac{k-k^\prime}{4} ,d+2)}] &> 0.52 \times  E[Y^{d+2}(0,k^\prime, k^\prime +  \frac{k-k^\prime}{4}) ] + 0.87 \times E[Y^{d+2}(\geq1,k^\prime, k^\prime + \frac{k-k^\prime}{4})]  \\
& >  0.52 \times \frac{1}{e^{0.5}} |s^u_{d+2}|   + 0.87 \times  \bigg( 1 - \frac{1}{e^{0.5}}  \bigg) |s^u_{d+2}| = 0.65 \cdot |s^u_{d+2}|
\end{split}
\end{equation*}

Similarly,

\begin{equation*}
\begin{split}
E[\tilde{S}_{(k^\prime + \frac{k-k^\prime}{2} ,d+2)}] &> 0.65 \times \frac{1}{e^{0.5}} |s^u_{d+2}|  + 0.87 \times  \bigg( 1 - \frac{1}{e^{0.5}}  \bigg) |s^u_{d+2}| =  0.73 \cdot |s^u_{d+2}|
\end{split}
\end{equation*}

\begin{equation*}
\begin{split}
E[\tilde{S}_{(k^\prime + \frac{3(k-k^\prime)}{4} ,d+2)}] &> 0.73 \times \frac{1}{e^{0.5}} |s^u_{d+2}| +  0.87 \times  \bigg( 1 - \frac{1}{e^{0.5}}  \bigg) |s^u_{d+2}| =  0.78 \cdot |s^u_{d+2}|
\end{split}
\end{equation*}

Now, we use the above expectations to calculate $E[\tilde{S}_{(t_i,d+1)}]$ in four steps:

\begin{equation*}
\begin{split}
E[\tilde{S}_{(k^\prime +  \frac{k-k^\prime}{4},d+1)}] & > 0.14 \times E[Y^{d+1}(0,k^\prime, k^\prime +  \frac{k-k^\prime}{4}) ] + 0.53 \times E[Y^{d+2}(1,k^\prime, k^\prime +  \frac{k-k^\prime}{4}) ] \\
&+ 0.86 \times (E[Y^{d+1}(\geq 1,k^\prime, k^\prime +  \frac{k-k^\prime}{4}) ]- E[Y^{d+2}(\geq1,k^\prime, k^\prime +  \frac{k-k^\prime}{4}) ])\\
& > 0.125 \times \frac{1}{e^{0.25}}  \cdot |\sim s^u_{d+1}| + 0.53 \times \bigg( \bigg(1 - \frac{1}{e^{0.5}} \bigg) \cdot \frac{|\sim s^u_{d+1}|}{2}  \bigg)  \\
& + 0.86 \times \bigg( 1 - \frac{1}{e^{0.25}} - \frac{1}{2} \bigg(1 - \frac{1}{e^{0.5}} \bigg) \bigg)  \cdot |\sim s^u_{d+1}| \\
& = \bigg( 0.125 \times 0.778 + 0.52 \times 0.15 + 0.86 \times \big( 1 - 0.778 - 0.15 \big)  \bigg) \cdot |\sim s^u_{d+2}|\\
& > 0.24 \cdot |s^u_{d+1}|
\end{split}
\end{equation*}

\begin{equation*}
\begin{split}
E[\tilde{S}_{(k^\prime +  \frac{k-k^\prime}{2},d+1)}] & >E[\tilde{S}_{(k^\prime + \frac{k-k^\prime}{4},d+1)}] \times E[Y^{d+1}(0,k^\prime +  \frac{k-k^\prime}{4}), k^\prime +  \frac{k-k^\prime}{2}) ] \\
& + E[\tilde{S}_{(k^\prime + \frac{k-k^\prime}{4} ,d+2)}] \times E[Y^{d+2}(1,k^\prime, k^\prime +  \frac{k-k^\prime}{4}) ]  + \\
&0.86 \times \bigg(E[Y^{d+1}\big(\geq 1,k^\prime, k^\prime +  \frac{k-k^\prime}{4}\big) ]- E[Y^{d+2}\big(\geq1,k^\prime, k^\prime +  \frac{k-k^\prime}{4}\big) ]\bigg)\\
& > \bigg( 0.24 \times 0.778 + 0.66 \times 0.15 + 0.86 \times \big( 1 - 0.778 - 0.15 \big)  \bigg) \cdot |\sim s^u_{d+2}| > 0.34 \cdot |s^u_{d+1}|
\end{split}
\end{equation*}

Similarly,

\begin{equation*}
\begin{split}
E[\tilde{S}_{(k^\prime +  \frac{3(k-k^\prime)}{4},d+1)}] & > \bigg( 0.34 \times 0.778 + 0.73 \times 0.15 +  0.86 \times \big( 1 - 0.778 - 0.15 \big)  \bigg) \cdot |\sim s^u_{d+2}|\\
& > 0.44 \cdot |s^u_{d+1}|
\end{split}
\end{equation*}

\begin{equation*}
\begin{split}
E[\tilde{S}_{t_i,d+1)}] & > \bigg( 0.44 \times 0.778 + 0.78 \times 0.15 + 0.86 \times \big( 1 - 0.778 - 0.15 \big)  \bigg) \cdot |\sim s^u_{d+2}|\\
& > 0.52 \cdot |s^u_{d+1}|
\end{split}
\end{equation*}

\textbf{Proof of H5 .} To prove this hypothesis, we use the same four-step approach that we used to prove H4. The expectations calculated for different subintervals in the proof of H4 are used to prove H5. We briefly show the calculations below:

\begin{equation*}
\begin{split}
E[\tilde{S}_{(k^\prime +  \frac{k-k^\prime}{4},d)}] & > \bigg( 0 \times \frac{1}{e^{1/8}} + 0.125 \times \frac{1}{2} \times \big( 1 - \frac{1}{e^{1/4}} \big) + 0.52 \times  \big( 1 - \frac{1}{e^{1/8}} - \frac{1}{2} \times \big( 1 - \frac{1}{e^{1/4}} \big) \big)  \bigg) \cdot |\sim s^u_{d}|\\
& =  \bigg( 0 \times 0.88 + 0.125 \times 0.11 + 0.52 \times  \big( 1 - 0.88 - 0.11  \big) \bigg) \cdot |\sim s^u_{d}| > 0.018 \cdot |s^u_{d}|
\end{split}
\end{equation*}

\begin{equation*}
\begin{split}
E[\tilde{S}_{(k^\prime +  \frac{k-k^\prime}{2},d)}] & >  \bigg( 0.018 \times 0.88 + 0.24 \times 0.11 + 0.66 \times  \big( 1 - 0.88 - 0.11 \big) \bigg) \cdot |\sim s^u_{d}| > 0.04 \cdot |s^u_{d}|
\end{split}
\end{equation*}

\begin{equation*}
\begin{split}
E[\tilde{S}_{(k^\prime +  \frac{3(k-k^\prime)}{4},d)}] & >  \bigg( 0.05 \times 0.88 + 0.34 \times 0.11 + 0.74 \times \big( 1 - 0.88 - 0.11 \big) \bigg) \cdot |\sim s^u_{d}| > 0.08 \cdot |s^u_{d}|
\end{split}
\end{equation*}

\begin{equation*}
\begin{split}
E[\tilde{S}_{(t_i,d)}] & >  \bigg( 0.08 \times 0.88 + 0.24 \times 0.11 +  0.796 \times \big( 1 - 0.88 - 0.11 \big) \bigg) \cdot |\sim s^u_{d}| > 0.125 \cdot |s^u_{d}|
\end{split}
\end{equation*}

\end{proof}

\textbf{Base Case.}  We first show that the lemma holds for any $d > N - 3$. Since \RSGS~ places communicating nodes in a subtree of size 2, clearly for $d = N - 1$, the number of nodes $x \in S_{(t,d)}$ is always 2. Also, for $d = N - 2$, the number of nodes $x \in S_{(t,d)}$ is at least 3 with probability 1. Thus the lemma holds for $d > N - 3$.

Now we show that the induction hypotheses hold for $d = N - 3$. If the hypotheses hold for $t = t_7$, then they will also hold for $t > t_7$ because any node that moves in $s^u_d$ at time $t_i$, $i>7$, is the communicating node $v_i$, and $v_i \in (s^u_d \cap V)$. That leaves us in the necessity to show that the hypotheses hold for $t = t_7$.

Since a subtree at level $d$ is split into two subtrees at level $d+1$, we can write the following:

\begin{equation}
\begin{split}
\label {eq:1}
E[|S_{(t_i,d)}|] = E[|S_{(t_i,d+1)}|] + E[|\tilde{S}_{(t_i,d+1)}|] = E[|S_{(t_i,N -1)}|] + \sum_{j=N - 1}^{d+1} E[|\tilde{S}_{(t_i,j)}|]
\end{split}
\end{equation}

Also, for any subtree $s$, the expected number of nodes in $(s \cap V)$ at time $t_i$ depends on the expected number of nodes in $(s \cap V)$ at $t_{i-1}$, the expected number of nodes in $(s \cap V)$ that move in subtree $s$ at time $t_i$, and the expected number of nodes that leave the subtree $s$ at time $t_i$. We get:

\begin{equation}
\begin{split}
\label {eq:2}
E[|\tilde{S}_{(t_i,d)}|] & = E[|\tilde{S}_{(t_{i-1},d)}|] + \frac{E[|\tilde{S}_{(t_{i-1},d+1)}|]}{|\sim s^u_{d+1}|} -\frac{E[|\tilde{S}_{(t_{i-1},d)}|]}{|\sim s^u_{d+1}|} \\
& = E[|\tilde{S}_{(t_{i-1},d)}|] + \frac{E[|\tilde{S}_{(t_{i-1},d+1)}|]}{2^{N- d+2}} -\frac{E[|\tilde{S}_{(t_{i-1},d)}|]}{2^{N - d-1}}
\end{split}
\end{equation}

As we know, $E[|S_{(t_i,N -1)}|] = 2$ for $i\geq 1$, combining Equations \ref{eq:1} and \ref{eq:2}, we get:

\begin{equation}
\begin{split}
\label {eq:3}
E[|S_{(t_i,d)}|] &= 2 + \sum_{j=N - 1}^{d+1} \bigg(  E[|\tilde{S}_{(t_{i-1},j)}|] + \frac{E[|\tilde{S}_{(t_{i-1},j-1)}|]}{2^{N - j-2}} -  \frac{E[|\tilde{S}_{(t_{i-1},d)}|]}{2^{N - j-1}} \bigg)
\end{split}
\end{equation}

We use Equation \ref{eq:3} to calculate $E(S_{(t_7, N -1)}]$, $E(S_{(t_7, N -2)}]$ and $E(S_{(t_7, N -3)}]$, and show that the hypotheses hold for time $t_7$ and $d = N - 3$. For the time $t_4$ and $d = N -2$,

\[E[|S_{(t_4,N - 2)}|] = 2 + \bigg(  E[|\tilde{S}_{(t_{3},N -1)}|] + \big( E[|\tilde{S}_{(t_{3},N -1)}|] \cdot \frac{1}{1}\big) - \big(E[|\tilde{S}_{(t_{3},N -2)}|] \cdot \frac{1}{2} \big) \bigg) \]

Clearly, $E[|\tilde{S}_{(t_{3},N -1)}| = 1$, $E[|\tilde{S}_{(t_{3},N -1)}|] = 1$, and $E[|\tilde{S}_{(t_{3},N -2)}|] = 1$. Thus,

\[E[|S_{(t_4,N - 2)}|] = 2 + \bigg(  1 + \big(  1 \times 1\big) - \big(1 \times \frac{1}{2} \big) \bigg) = 3.5\]

And,

\[ E[|\tilde{S}_{(t_4,N - 1)}|] = E[|S_{(t_4,N - 2)}|] - 2 = 3.5-2 = 1.5\]

\[E[|\tilde{S}_{(t_{4},N -2)}|] =  E[|\tilde{S}_{(t_{3},N -2)}|] + \big( E[|\tilde{S}_{(t_{3},N -1)}|] \cdot \frac{1}{2}\big) - \big(E[|\tilde{S}_{(t_{3},N -1)}|] \cdot \frac{1}{2^{2}} \big) \bigg) \]

\[\implies E[|\tilde{S}_{(t_{4},N -2)}|] =  0 +  1 \times \frac{1}{2} - 0 \times \frac{1}{4}  = \frac{1}{2}\]

Thus, we get:

\[ E[|S_{(t_5,N - 2)}|]  = 2 +  1.5 + \bigg(1 \times \frac{1}{1} \bigg) - \bigg(1.5\times \frac{1}{2} \bigg) = 3.75 \]

\[ E[|S_{(t_6,N - 2)}|]  = 2 +  1.75 + \bigg( 1 \times \frac{1}{1} \bigg) - \bigg(1.75\times \frac{1}{2} \bigg) = 3.88 \]

\[ E[|S_{(t_7,N - 2)}|]  = 2 + 1.88 + \bigg(1 \times \frac{1}{1} \bigg) - \bigg(1.88\times \frac{1}{2} \bigg) = 3.94 \]

\[ E[|S_{(t_8,N - 2)}|]  = 2 +  3.94 + \bigg( 1 \times \frac{1}{1} \bigg) - \bigg(3.94\times \frac{1}{2} \bigg) = 3.97 \]

Similarly,

\[E[|S_{(t_5,N - 3)}|] = E[|S_{(t_5,N - 2)}|] + \bigg(  E[|\tilde{S}_{(t_{4},N -2)}|] + \big( E[|\tilde{S}_{(t_{4},N -1)}|] \cdot \frac{1}{2}\big) - \big(E[|\tilde{S}_{(t_{4},N -3)}|] \cdot \frac{1}{4} \big) \bigg) \]

\[\implies E[|S_{(t_5,N - 3)}|] = 3.75 +  \frac{1}{2} + \bigg( 1.75 \times \frac{1}{2}\bigg) - \bigg( \frac{1}{2} \times \frac{1}{4} \bigg) = 5.0 \]

Similarly,

\[E[|S_{(t_6,N - 3)}|] = 3.88 +  (5.0-3.88) + \bigg( 1.88 \times \frac{1}{2}\bigg) - \bigg( (5.0-3.88) \times \frac{1}{4} \bigg) = 5.66 \]

\[E[|S_{(t_7,N - 3)}|] = 3.94 +  (5.66-3.94) + \bigg( 1.94 \times \frac{1}{2}\bigg) - \bigg( (5.66-3.94) \times \frac{1}{4} \bigg) = 6.20 \]

\[E[|S_{(t_8,N - 3)}|] = 3.97 +  (6.20-3.97) + \bigg( 1.97 \times \frac{1}{2}\bigg) - \bigg( (6.20-3.97) \times \frac{1}{4} \bigg) = 6.62 \]

We can also calculate $E[|\tilde{S}_{(t_8,N - 3)}|]$ as the following:

\[E[|\tilde{S}_{(t_5,N -3)}|] \leq \frac{1}{2} \times \frac{1}{4} - \frac{1}{2} \times \frac{1}{4} \times{1}{8}  = 0.11\]

\[E[|\tilde{S}_{(t_6,N -3)}|] \leq 0.11 + \bigg((5.0-3.88) \times \frac{1}{4}\bigg) - \bigg(0.11 \times {1}{8} \bigg)  = 0.38\]

\[E[|\tilde{S}_{(t_7,N -3)}|] \leq 0.38 + \bigg((5.66-3.94) \times \frac{1}{4}\bigg) - \bigg(0.38 \times {1}{8} \bigg)  = 0.76\]

\[E[|\tilde{S}_{(t_8,N -3)}|] \leq 0.76 + \bigg((6.20-3.97) \times \frac{1}{4}\bigg) - \bigg(0.76 \times {1}{8} \bigg)  = 1.22\]

\end{document}